\newcommand{\subsubfloat}[2]{%
  \begin{tabular}{@{}c@{}}#1\\#2\end{tabular}%
}
\newcommand\appendix@section[1]{%
\refstepcounter{section}%
\orig@section*{Appendix \@Alph\c@section: #1}%
}
\let\orig@section\section
\g@addto@macro\appendix{\let\section\appendix@section}
\newcommand{\algorithmicbreak}{\textbf{break}}
\newcommand{\BREAK}{\STATE \algorithmicbreak}
\newtheorem{remark}{Remark}
\newtheorem{definition}[remark]{Definition}
\newtheorem{theorem}[remark]{Theorem}
\newtheorem{proposition}[remark]{Proposition}
\newcommand{\st}{\ensuremath{~\textnormal{s.t.}~}}
\newcommand{\boundedDist}{bounded distance}
\newcommand{\BoundedDist}{Bounded distance}
\newcommand{\boundedRepre}{bounded representation}
\newcommand{\adjacencyRepre}{adjacency representation}
\DeclareMathOperator{\adim}{adim}
\definecolor{fuchsia}{HTML}{800080}
\newcommand\RT[1]{\textcolor{blue}{RT: #1}}
\newcommand\YR[1]{\textcolor{red}{YR: #1}}
\newcommand\SM[1]{\textcolor{orange}{SM: #1}}
\newcommand\TODO[1]{\textcolor{red}{TODO: #1}}
\newcommand\loss{\operatorname{loss}}
\title{Rethinking $(k,\ell)$-anonymity in social graphs: $(k,\ell)$-adjacency anonymity and $(k,\ell)$-(adjacency) anonymous transformations}
\author{S. Mauw$^1$, Y.  Ram\'{i}rez-Cruz$^2$ and R. Trujillo-Rasua$^2$\\ 
{\small $^1$CSC, $^2$SnT}\\\small{University of Luxembourg}\\ 
{\small 6, av. de la Fonte, L-4364 Esch-sur-Alzette, Luxembourg}\\ 
{\small \{sjouke.mauw, yunior.ramirez, rolando.trujillo\}\@@uni.lu}} 
\begin{document}
\maketitle

\begin{abstract}
This paper treats the privacy-preserving publication of social graphs in the 
presence of active adversaries, that is, adversaries with the ability to 
introduce sybil nodes in the graph prior to publication and leverage them to 
create unique fingerprints for a set of victim nodes and re-identify them 
after publication. Stemming from the notion of $(k,\ell)$-anonymity, 
we introduce 
$(k,\ell)$-anonymous transformations, characterising graph perturbation methods 
that ensure protection from active adversaries levaraging up to $\ell$ 
sybil nodes. Additionally, we introduce a new privacy property: 
$(k,\ell)$-adjacency anonymity, which relaxes the assumption made 
by $(k,\ell)$-anonymity that adversaries can control all distances 
between sybil nodes and the rest of the nodes in the graph. The new 
privacy property is in turn the basis for a new type of graph perturbation: 
$(k,\ell)$-adjacency anonymous transformations. We propose algorithms for 
obtaining $(k,1)$-adjacency anonymous transformations for arbitrary values 
of $k$, as well as $(2,\ell)$-adjacency anonymous transformations 
for small values of $\ell$. 
\end{abstract}

{\small {\it Keywords}: social graphs, privacy-preserving publication, active 
adversaries, perturbation methods.} 

\section{Introduction}
\label{sectIntro}

Online social networks (OSNs) have become the most successful application 
of our time. Nearly two billion persons\footnote{Source: 
\href{https://www.statista.com/statistics/272014/global-social-networks-ranked-
by-number-of-users/}{statista.com}, consulted on April 21st, 2017.} regularly use some OSN to 
interact with 
friends and relatives, share information, get news, entertainment, etc. 
As a result of this, massive amounts of information about human behaviour, 
personal relationships, consumption patters, personal preferences, and more, 
are generated everyday. 
An important part of this information is encoded in the form of social graphs. 
In a social graph, every vertex corresponds to a person (a user 
of the OSN), whereas edges represent relations between 
individuals. Rich personal information, such as name, address, etc. is usually 
associated to vertices as attributes. Edges can also have associated attributes, 
which may encode, for instance, 
the nature of the relation (friendship, laboral, family), the date and place 
where it was established, etc. 

This massive amount of information is enormously valuable. OSNs themselves 
analyse this data in order to determine the advertisement they show to their 
users, suggest new potential friends, filter out the information that they 
consider not to be interesting to the user, etc. As holders of the information, 
the OSN can effectively access the totality of the available data, as authorised 
by the users when they upload their information. However, third parties, 
such as social scientists, market researchers, public institutions 
and private companies, are also interested in accessing and analysing 
a part of this information for conducting population studies, assessing 
the effect of communication campaigns, surveying public opinion, and 
many other purposes.  
In order to enable these studies, it is necessary that the OSN administrators 
release samples of their social graphs. However, despite the usefulness 
of the studies that can be conducted on the released data, the sensitive 
nature of a part of the information encoded in social graphs, e.g. political 
or religious affiliation, arises serious privacy concerns. 

A \emph{na\"ive} approach to protect the privacy of users 
in publishing social graphs is to remove all personally 
identifiable information from the released graph. However, as shown 
in \cite{NS2009}, even a graph with no identifying attributes can leak 
sensitive information, since some structural properties (the degree of 
vertices, their neighbourhoods, etc.) can be unique to certain users.  
A \emph{re-identification attack} seeks to leverage some knowledge about 
a set of users, the victims, to re-identify them after the graph is released. 
For example, an adversary who knows the number of friends of all the victim 
vertices can later re-identify them in the released graph if every value 
happens to be unique, even if all vertex 
and edge attributes have been removed. Once a set of users is re-identified, 
the attacker can learn sensitive information, such as the existence 
of relations between two users or the (co-)affiliation of some of them 
to a community. 

According to the means by which adversaries obtain the knowledge used 
to re-identify the victims, they are classified as \emph{passive} or 
\emph{active} \cite{BDK2007}. Passive adversaries rely on information 
obtainable from publicly available sources, such as other OSNs, but do not 
attempt to purposely alter the structure of the network. On the other hand, 
active adversaries enroll sybil nodes in the network and try to force the 
creation of structural patterns that allow them to later re-identify 
the victims. Several active attacks were described 
in \cite{BDK2007,Peng2012,PLZW2014}. In every attack, the adversary 
inserts a small number of sybil nodes, and then creates unique connection 
patters, referred to as \emph{fingerprints}, between sybil nodes 
and the victims. Additionally, the connections between pairs of sybil nodes 
are established in such a way that the subgraph induced by them is not 
isomorphic (with high probability) to any other subgraph. Once the graph 
is published, the uniquely identifiable set of sybil nodes is retrived, 
and the unique fingerprints allow to re-identify the victims. 

The notion of $(k,\ell)$-anonymity was introduced in \cite{TY2016} as 
a measure of the resistance of a social graph to active attacks. 
Informally, a $(k,\ell)$-anonymous graph ensures that an adversary with 
the ability to insert up to $\ell$ sybil nodes in the network, cannot 
use the distances from these sybil nodes to other vertices to uniquely 
identify any vertex. This guarantee comes from the fact that each vertex 
is ensured to be undistinguishable from at least other $k-1$ vertices 
according to the so-called metric representation with respect to every 
vertex subset of size at most $\ell$. 
Then, a family of methods that transform a $(1,1)$-anonymous graph $G$ 
(which is the least private type of graphs) into a graph $G'$ 
that satsifies $(k,\ell)$-anonymity for $k>1$ or $\ell > 1$, was proposed 
in \cite{MauwTrujilloXuan2016,MauwRamirezTrujillo2016}. 

In this paper, we re-visit the notion of $(k,\ell)$-anonymity. We focus 
on two assumptions encoded in $(k,\ell)$-anonymity: treating every vertex 
subset of size up to $\ell$ as a potential set of sybil nodes, and assuming 
that the adversary is able to control the distances between the set of 
sybil nodes and every other vertex in the graph, which is not realistic. 
As a result, we first propose the notion of $(k,\ell)$-anonymous 
transformations, which ensure the same level of protection that 
would be achieved by enforcing $(k,\ell)$-anonymity while 
performing less modifications in the graph. Then, we introduce a new privacy 
property, $(k,\ell)$-adjacency anonymity, which relaxes the assumption on 
the distances that the adversary is able to control. Finally, these two ideas 
are combined in the notion of $(k,\ell)$-adjacency anonymous transformation,  
and we propose two methods, based on edge additions and removals, 
for performing $(k,1)$- and $(2,\ell)$-adjacency 
anonymous transformations. 

The remainder of this paper is structured as follows. 
In Section~\ref{sect-adversary-model} we discuss our 
new adversary model, introducing $(k,\ell)$-anonymous transformations, 
$(k,\ell)$-adjacency anonymity, and $(k,\ell)$-adjacency anonymous 
transformations. Section~\ref{sect-k-1-trans} introduces the algorithm 
for obtaining $(k,1)$-adjacency anonymous transformations, whereas 
Sect\-ion~\ref{sect-2-ell-trans} introduces the algorithm 
for obtaining $(2,\ell)$-adjacency anonymous transformations. Finally, 
we discuss our results and possible directions for future work in 
Section~\ref{sect-conclusions}. Before proceeding, we will introduce some 
notation that will be used throughout the paper. We will use the 
notation $u\sim_G v$ for two vertices $u,v\in V(G)$ that are adjacent 
in $G$, \emph{i.e.} $(u,v)\in E(G)$. The \emph{open neighbourhood} of a vertex 
$u\in V(G)$, denoted by $N_G(u)$, is the set $N_G(u)=\{v:\;u\sim_G v\}$, 
whereas the \emph{closed neighbourhood} of $u$ in $G$ is the set 
$N_G[u]=\{u\}\cup N_G(u)$. 
Similarly, for a set $S\subset V(G)$, we define 
$N_G(S)=\cup_{v\in S}N_G(v)\setminus S$ and $N_G[S]=\cup_{v\in S}N_G[v]$. 
The \emph{degree} of a vertex $u$, denoted by $\delta_G(u)$, is its number 
of neighbours, i.e. $\delta_G(u)=|N_G(u)|$. 
In a graph $G$ of order $n$, we will refer to vertices of degree 
$0$, $1$ and $n-1$ as \emph{isolated}, \emph{end-} and \emph{dominant} 
vertices, respectively. 
The \emph{distance} between two vertices $u$ and $v$ in $G$, 
denoted as $d_G(u,v)$, is the number of edges in a shortest path joining 
$u$ and $v$. 
For a graph $G=(V,E)$ and a subset $S$ 
of vertices of $G$, we will denote by $\langle S \rangle_G$ the subgraph 
of $G$ induced by $S$, that is $\langle S \rangle_G=(S,E\cap S\times S)$. 
In the previously defined notations, if there is no ambiguity, 
we will drop the graph-specific subindices and 
simply write $u\sim v$, $N(u)$, $\delta(u)$, etc. 
For a graph $G$, we define 
$\delta(G)=\min_{v\in V(G)}\{\delta_G(v)\}$ and 
$\Delta(G)=\max_{v\in V(G)}\{\delta_G(v)\}$ and, as usual, 
we will denote by $K_n$ and $N_n$ the complete and empty 
graphs of order $n$, respectively. 

\section{Adversary model}\label{sect-adversary-model} 

An active adversary uses graph properties of a set of sybil 
nodes to re-identify users in an anonymised social graph. Prior publication of 
the social network graph, the active attacker adds a set 
of sybil 
nodes to the network (e.g. nodes $1,2,3$ and $4$ in Figure~\ref{fig-sybils}). 
The sybil nodes establish links between themselves and also with the victims 
(e.g. users $H$ and $G$ in Figure~\ref{fig-sybils}). 
After publication of the 
social network graph without the users' identifiers, the attacker
first searches for the subgraph formed by the sybil nodes. Victims
connected to the attacker subgraph can be reidentified by using the neighbour 
relation between sybil nodes and victims. For example, the non-sybil nodes
connected to $1$ and $4$ in Figure~\ref{fig-reidentification}, respectively, 
must be $H$ and $G$. 
This allows the 
adversary to acquire knowledge that was supposed to remain private, such as the 
existence of a link between users $H$ and $G$.

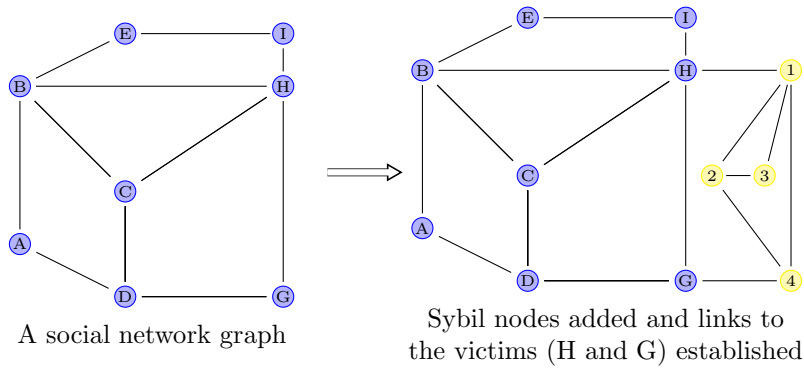
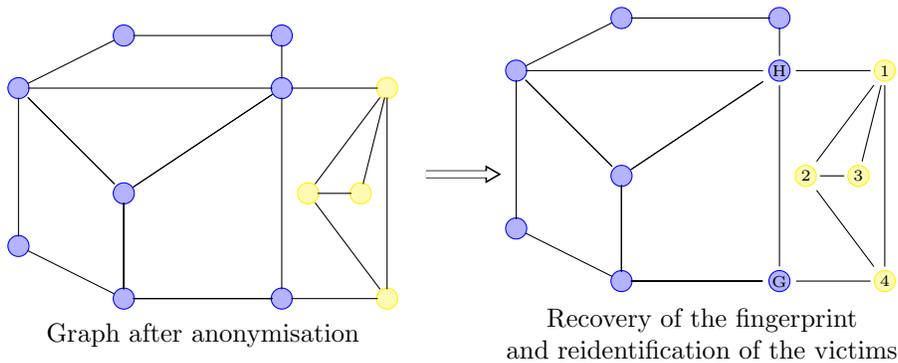
\begin{figure}%
\centering
\subfloat[Active attack prior publication.]{\label{fig-sybils}
\begin{minipage}{\columnwidth}\footnotesize
  \centering
  \subsubfloat{
	\begin{tikzpicture}[scale=0.7]

	\filldraw[fill opacity=0.3, blue, fill=blue]  (0,0) circle (0.2cm); 
	\node (C) at (0,0) {\tiny{C}};
	
	\filldraw[fill opacity=0.3, blue, fill=blue]  (-2,-1) circle (0.2cm); 
	\node (A) at (-2,-1) {\tiny{A}};
	
	\filldraw[fill opacity=0.3, blue, fill=blue]  (-2,2) circle (0.2cm); 
	\node (B) at (-2,2) {\tiny{B}};
	
	\filldraw[fill opacity=0.3, blue, fill=blue]  (0,-2) circle (0.2cm); 
	\node (D) at (0,-2) {\tiny{D}};
	
	\filldraw[fill opacity=0.3, blue, fill=blue]  (3,-2) circle (0.2cm); 
	\node (G) at (3,-2) {\tiny{G}};
	
	\filldraw[fill opacity=0.3, blue, fill=blue]  (0,3) circle (0.2cm); 
	\node (E) at (0,3) {\tiny{E}};
	
	\filldraw[fill opacity=0.3, blue, fill=blue]  (3,2) circle (0.2cm); 
	\node (H) at (3,2) {\tiny{H}};
	
	\filldraw[fill opacity=0.3, blue, fill=blue]  (3,3) circle (0.2cm); 
	\node (I) at (3,3) {\tiny{I}};
	
	\draw(A) -- (B) -- (C) -- (D) -- (G) -- (D) -- (C) -- (H);
	\draw(H) -- (C) -- (B) -- (H) -- (I) -- (E) -- (B);
	\draw(A) -- (D) -- (C);
	\draw (H) -- (G);
	
	\end{tikzpicture}
  }{A social network graph}%
  \qquad
  \hspace{-0.5cm}
  \subsubfloat{
	  \begin{tikzpicture}
	    \draw[	thick, decoration={markings,mark=at position
	       1 with {\arrow[semithick]{open triangle 60}}},
	       double distance=1.4pt, shorten >= 5.5pt,
	       preaction = {decorate},
	       postaction = {draw,line width=2.4pt, white,shorten >= 4.5pt}] (0,0) 
	       -- (1,0);
	  \end{tikzpicture}
  }
  \qquad
  \subsubfloat{
	\begin{tikzpicture}[scale=0.7]

	\filldraw[fill opacity=0.3, blue, fill=blue]  (0,0) circle (0.2cm); 
	\node (C) at (0,0) {\tiny{C}};
	
	\filldraw[fill opacity=0.3, blue, fill=blue]  (-2,-1) circle (0.2cm); 
	\node (A) at (-2,-1) {\tiny{A}};
	
	\filldraw[fill opacity=0.3, blue, fill=blue]  (-2,2) circle (0.2cm); 
	\node (B) at (-2,2) {\tiny{B}};
	
	\filldraw[fill opacity=0.3, blue, fill=blue]  (0,-2) circle (0.2cm); 
	\node (D) at (0,-2) {\tiny{D}};
	
	\filldraw[fill opacity=0.3, blue, fill=blue]  (3,-2) circle (0.2cm); 
	\node (G) at (3,-2) {\tiny{G}};
	
	\filldraw[fill opacity=0.3, blue, fill=blue]  (0,3) circle (0.2cm); 
	\node (E) at (0,3) {\tiny{E}};
	
	\filldraw[fill opacity=0.3, blue, fill=blue]  (3,2) circle (0.2cm); 
	\node (H) at (3,2) {\tiny{H}};
	
	\filldraw[fill opacity=0.3, blue, fill=blue]  (3,3) circle (0.2cm); 
	\node (I) at (3,3) {\tiny{I}};
	
	\draw(A) -- (B) -- (C) -- (D) -- (G) -- (D) -- (C) -- (H);
	\draw(H) -- (C) -- (B) -- (H) -- (I) -- (E) -- (B);
	\draw(A) -- (D) -- (C);
	\draw (H) -- (G);
	
	\filldraw[fill opacity=0.3, yellow, fill=yellow]  (3.5,0) circle 
	(0.2cm); 
	\node (a3) at (3.5,0) {\tiny{$2$}};
	
	\filldraw[fill opacity=0.3, yellow, fill=yellow]  (5,2) circle (0.2cm); 
	\node (a2) at (5,2) {\tiny{$1$}};
	
	\filldraw[fill opacity=0.3, yellow, fill=yellow]  (4.5,0) circle (0.2cm); 
	\node (a4) at (4.5,0) {\tiny{$3$}};
	
	\filldraw[fill opacity=0.3, yellow, fill=yellow]  (5,-2) circle (0.2cm); 
	\node (a5) at (5,-2) {\tiny{$4$}};
	
	\draw(a3) -- (a2) -- (a4) -- (a3) -- (a5);
	
	\draw(a2) -- (a5);
	
	\draw(a2) -- (H);
	
	\draw(a5) -- (G);
	
	\end{tikzpicture}
  
  }{Sybil nodes added and links to \\ the victims (H and G) established}
    \\  \medskip
\end{minipage}
}

\subfloat[Reidentification after publication.]{\label{fig-reidentification}
\begin{minipage}{\columnwidth}\footnotesize
  \centering

  \subsubfloat{
	\begin{tikzpicture}[scale=0.7]

	\filldraw[fill opacity=0.3, blue, fill=blue]  (0,0) circle (0.2cm); 
	\node (C) at (0,0) {\tiny{}};
	
	\filldraw[fill opacity=0.3, blue, fill=blue]  (-2,-1) circle (0.2cm); 
	\node (A) at (-2,-1) {\tiny{}};
	
	\filldraw[fill opacity=0.3, blue, fill=blue]  (-2,2) circle (0.2cm); 
	\node (B) at (-2,2) {\tiny{}};
	
	\filldraw[fill opacity=0.3, blue, fill=blue]  (0,-2) circle (0.2cm); 
	\node (D) at (0,-2) {\tiny{}};
	
	\filldraw[fill opacity=0.3, blue, fill=blue]  (3,-2) circle (0.2cm); 
	\node (G) at (3,-2) {\tiny{}};
	
	\filldraw[fill opacity=0.3, blue, fill=blue]  (0,3) circle (0.2cm); 
	\node (E) at (0,3) {\tiny{}};
	
	\filldraw[fill opacity=0.3, blue, fill=blue]  (3,2) circle (0.2cm); 
	\node (H) at (3,2) {\tiny{}};
	
	\filldraw[fill opacity=0.3, blue, fill=blue]  (3,3) circle (0.2cm); 
	\node (I) at (3,3) {\tiny{}};
	
	\draw(A) -- (B) -- (C) -- (D) -- (G) -- (D) -- (C) -- (H);
	\draw(H) -- (C) -- (B) -- (H) -- (I) -- (E) -- (B);
	\draw(A) -- (D) -- (C);
	\draw (H) -- (G);
	
	\filldraw[fill opacity=0.3, yellow, fill=yellow]  (3.5,0) circle 
	(0.2cm); 
	\node (a3) at (3.5,0) {\tiny{}};
	
	\filldraw[fill opacity=0.3, yellow, fill=yellow]  (5,2) circle (0.2cm); 
	\node (a2) at (5,2) {\tiny{}};
	
	\filldraw[fill opacity=0.3, yellow, fill=yellow]  (4.5,0) circle (0.2cm); 
	\node (a4) at (4.5,0) {\tiny{}};
	
	\filldraw[fill opacity=0.3, yellow, fill=yellow]  (5,-2) circle (0.2cm); 
	\node (a5) at (5,-2) {\tiny{}};
	
	\draw(a3) -- (a2) -- (a4) -- (a3) -- (a5);
	
	\draw(a2) -- (a5);
	
	\draw(a2) -- (H);
	
	\draw(a5) -- (G);
	
	\end{tikzpicture}
  }{Graph after anonymisation}%
  \qquad
  \hspace{-0.5cm}
  \subsubfloat{
	  \begin{tikzpicture}
	    \draw[	thick, decoration={markings,mark=at position
	       1 with {\arrow[semithick]{open triangle 60}}},
	       double distance=1.4pt, shorten >= 5.5pt,
	       preaction = {decorate},
	       postaction = {draw,line width=2.4pt, white,shorten >= 4.5pt}] (0,0) 
	       -- (1,0);
	  \end{tikzpicture}
  }
  \qquad
  \subsubfloat{
	\begin{tikzpicture}[scale=0.7]

	\filldraw[fill opacity=0.3, blue, fill=blue]  (0,0) circle (0.2cm); 
	\node (C) at (0,0) {\tiny{}};
	
	\filldraw[fill opacity=0.3, blue, fill=blue]  (-2,-1) circle (0.2cm); 
	\node (A) at (-2,-1) {\tiny{}};
	
	\filldraw[fill opacity=0.3, blue, fill=blue]  (-2,2) circle (0.2cm); 
	\node (B) at (-2,2) {\tiny{}};
	
	\filldraw[fill opacity=0.3, blue, fill=blue]  (0,-2) circle (0.2cm); 
	\node (D) at (0,-2) {\tiny{}};
	
	\filldraw[fill opacity=0.3, blue, fill=blue]  (3,-2) circle (0.2cm); 
	\node (G) at (3,-2) {\tiny{G}};
	
	\filldraw[fill opacity=0.3, blue, fill=blue]  (0,3) circle (0.2cm); 
	\node (E) at (0,3) {\tiny{}};
	
	\filldraw[fill opacity=0.3, blue, fill=blue]  (3,2) circle (0.2cm); 
	\node (H) at (3,2) {\tiny{H}};
	
	\filldraw[fill opacity=0.3, blue, fill=blue]  (3,3) circle (0.2cm); 
	\node (I) at (3,3) {\tiny{}};
	
	\draw(A) -- (B) -- (C) -- (D) -- (G) -- (D) -- (C) -- (H);
	\draw(H) -- (C) -- (B) -- (H) -- (I) -- (E) -- (B);
	\draw(A) -- (D) -- (C);
	\draw (H) -- (G);
	
	\filldraw[fill opacity=0.3, yellow, fill=yellow]  (3.5,0) circle 
	(0.2cm); 
	\node (a3) at (3.5,0) {\tiny{$2$}};
	
	\filldraw[fill opacity=0.3, yellow, fill=yellow]  (5,2) circle (0.2cm); 
	\node (a2) at (5,2) {\tiny{$1$}};
	
	\filldraw[fill opacity=0.3, yellow, fill=yellow]  (4.5,0) circle (0.2cm); 
	\node (a4) at (4.5,0) {\tiny{$3$}};
	
	\filldraw[fill opacity=0.3, yellow, fill=yellow]  (5,-2) circle (0.2cm); 
	\node (a5) at (5,-2) {\tiny{$4$}};
	
	\draw(a3) -- (a2) -- (a4) -- (a3) -- (a5);
	
	\draw(a2) -- (a5);

	\draw(a2) -- (H);

	\draw(a5) -- (G);
	
	\end{tikzpicture}
  
  }{Recovery of the fingerprint \\ and reidentification of the victims}
  \medskip
\end{minipage}
}

\caption{The four stages of an active attack. \label{fig-active-attack}}
\end{figure}

From a practical point of view, active attacks require the ability to insert 
sybil nodes in the social network and remain unnoticed by 
sybil detection techniques. This is a fairly easy task in today's social 
networks, as false 
positives in sybil detection are undesirable and registration to 
the network should be trivial; social networks understandably 
favour usability and user experience over sybil detection. 
From a theoretical point of view, an active attack relies on creating a 
\emph{unique} attacker subgraph. That is to say, the induced subgraph formed by 
the sybil nodes should have no trivial automorphism and no other subgraph 
in the network isomorphic to it. For example, assume that the adversary in the 
attack in Figure~\ref{fig-active-attack} could not insert the third node, i.e. 
the node labelled 
$3$. This makes the subgraph induced by $1, 2$ and $4$ isomorphic to the 
subgraph induced by $B$, $C$, and $H$, which prevents the attacker 
from correctly retrieving the inserted subgraph. 

Backstrom et al. already showed that, despite of the 
previously mentioned challenges, active attacks can be implemented 
successfully~\cite{BDK2007}. They proved that only $\log n$ sybil 
nodes, where $n$ is the number of vertices of the network, 
are needed to create an attacker subgraph which is unique 
with high probability. That makes active attacks particularly 
dangerous and hard to prevent. 

Effectively determining whether a social graph is vulnerable to an active 
attack 
is a necessary step towards developing a mitigation strategy against it. For 
example, the complete graph $G$ satisfies that for every proper subgraph $S$ 
there exists another subgraph $S'$ that is isomorphic to $S$. 
Such property 
makes an active attack unfeasible in a complete graph.  Determining the actual 
resistance of an arbitrary graph to active attacks is not 
trivial, though.
A first step on this direction was given in~\cite{TY2016}, where 
Trujillo-Rasua and Yero introduced the privacy measure $(k, \ell)$-anonymity. 

Consider a total order $\preceq$ on the vertices of a graph $G$. 
Given a set $S\subseteq V(G)$, let $(v_1,v_2,\ldots,v_t)$, where $v_i\in S$ 
for $i\in\{1,\ldots,t\}$, be the vector  
composed by the elements of $S$, in such a way that 
$v_1 \preceq v_2 \preceq \ldots \preceq v_t$. 
For the sake of simplicity in our presentation, 
in what follows we will abuse notation and refer to the 
\emph{ordered set} $S=\{v_1,v_2,\ldots,v_t\}$. 
Given an ordered set of sybil nodes $S = (s_1, \ldots, s_t)$ in a graph 
$G = (V, E)$, Trujillo-Rasua and Yero~\cite{TY2016} define the adversary 
knowledge about a user $u \in V$ as the vector 
$(d_G(v, s_1), \ldots, d_G(v, s_t))$. This vector is referred to as 
\emph{metric representation} of $u$ with respect to $S$, and denoted 
$r_G(u | S)$ \cite{Slater1975,Harary1976}. 

The metric representation was introduced in \cite{Slater1975,Harary1976} 
as a tool to define the so-called \emph{resolving sets}. 
A set $S\subset V(G)$ is said to be a 
resolving set\footnote{Multiple terminologies have been used for 
resolving sets in the literature. The term \emph{resolving set} was introduced 
by Slater in \cite{Slater1975}, whereas Harary and Melter \cite{Harary1976} 
introduced the terminology \emph{locating set} and Seb\H{o} and Tannier 
\cite{Sebo2004} preferred the term \emph{metric generator}.} 
of $G$ if every vertex $u\in V(G)$ has a unique metric representation 
with respect to $S$. 
This property of resolving sets inspired the definition in 
\cite{TY2016} of an opposed concept, with implications in vertex privacy. 

\begin{definition}[$k$-antiresolving set]\label{def-antires} 
Let $G=(V,E)$ be a non-trivial graph. A set $S\subset V$ 
is a \emph{$k$-antiresolving set} of $G$ if $k$ is 
the largest positive integer such that, 
for every $v\in V(G)\setminus S$, there exist vertices 
$w_1,w_2,\ldots,w_{k-1}\in V(G)\setminus S$ such 
that $v,w_1,w_2,\ldots,w_{k-1}$ are pairwise different and 
$$r_G(v\;|\;S)=r_G(w_1\;|\;S)=r_G(w_2\;|\;S)=\ldots=r_G(w_{k-1}\;|\;S).$$
\end{definition}

The minimum cardinality 
of a $k$-antiresolving set of a graph $G$ is called the $k$-metric 
antidimension of $G$. These concepts were used to quantify the privacy 
of a social graph in the presence of active attackers as follows. 

\begin{definition}[$(k,\ell)$-anonymity]\label{def-k-ell-anonymity}
A graph $G$ is said to satisfy \emph{$(k,\ell)$-anonymity} if $k$ 
is the smallest positive integer such that the $k$-metric antidimension 
of $G$ is smaller than or equal to $\ell$. 
\end{definition}

From a privacy perspective, if a graph satisfies $(k, \ell)$-anonymity, 
an attacker with the capacity to enrol, and successfully retrieve, 
up to $\ell$ sybil nodes in the graph would still be incapable 
of distinguishing any vertex from at least other $k-1$ vertices. 
Taking back again the example of the complete graph $K_n$, it is easy to prove 
that $K_n$ satisfies $(n-l, l)$-anonymity. It is worth noticing that $k = n-l$ 
corresponds to the
maximum value possible for $k$ in $(k, \ell)$-anonymity given $\ell = l$.  

Certainly, a graph satisfying $(k, \ell)$-anonymity for $k>1$ effectively 
resists active attacks when performed by at most $\ell$ sybil nodes. However, 
event the simplest of the privacy goals, namely transforming a 
$(1,1)$-anonymous graph into a $(k, \ell)$-anonymous graph $G'$ with either $k 
> 1$ or $\ell > 1$, has not been accomplished without significant information 
loss~\cite{MauwTrujilloXuan2016}. Our observation is that $(k, 
\ell)$-anonymity, although suitable to quantify resistance against active 
attacks, cannot be applied straightforwardly to privacy-preserving 
transformation of social graphs. 

\subsection{Revisiting $(k, \ell)$-anonymity}

$(k, 
\ell)$-anonymity quantifies over all subsets of vertices of size at most 
$\ell$. Therefore, a transformation from an original graph $G$ to an anonymized 
graph $G'$ satisfying, for example $(2, 
\ell)$-anonymity, must ensure that every subset of vertices $S$ in $G'$ with 
$|S| \leq \ell$ is a $k'$-antiresolving set where $k' \geq 2$, regardless of 
whether $S$ was indeed a $1$-antiresolving set in $G$ or not. In effect, 
assuming that the set of attacker nodes $S$ is already a $2$-antiresolving set 
in the original graph $G$, it is harmless to publish (with respect to $(2, 
|S|)$-anonymity ) a transformation of 
$G$ where $S$ is a $1$-antiresolving set. Consequently, when aiming at $(2, 
\ell)$-anonymity, a transformation method should only be concerned about those 
$1$-antiresolving sets with size at most $\ell$. We formalise this concept as 
follows.

\begin{definition}[$(k, \ell)$-anonymous transformation]
A pair $(G_1, G_2)$ is a \emph{$(k, \ell)$-ano\-nym\-ous transformation} if for 
every 
subset $S \subseteq V(G_1) \cap V(G_2)$ with $|S| \leq \ell$, $S$ being a 
$k_1$-antiresolving set in $G_1$, and $S$ being $k_2$-antiresolving set in 
$G_2$, it holds that $k_1 < k \implies k_2 \geq k$.
\end{definition}

Notice that, in particular, if a graph $G$ satisfies $(k,\ell)$-anonymity, 
then every pair $(G_0,G)$, where $G_0$ is an arbitrary graph, 
is a $(k,\ell)$-anonymous transformation. The converse is not true, 
as exemplified in Figure~\ref{fig-ex-k-ell-transform}. 

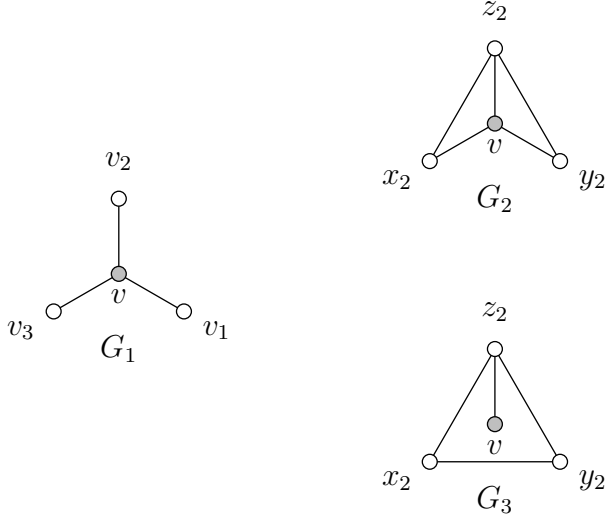
\begin{figure}[!ht]
\begin{center}

\begin{tikzpicture}[inner sep=0.7mm, place/.style={circle,draw=black,
fill=white},xx/.style={circle,draw=black!99, 
fill=black!99},gray1/.style={circle,draw=black!99, 
fill=black!25},gray2/.style={circle,draw=black!99, 
fill=black!50},gray3/.style={circle,draw=black!99,fill=black!75},
transition/.style={rectangle,draw=black!50,fill=black!20,thick}, 
line width=.5pt]

\def\radius{1cm}
\def\lblradius{1.5cm}

\def\n{3}

\foreach \ind in {1,...,3}\pgfmathparse{90+360/\n*\ind}\coordinate (g2\ind) at 
(\pgfmathresult:\radius);

\foreach \ind in {1,...,3}\pgfmathparse{90+360/\n*\ind}\coordinate (v2\ind) at (\pgfmathresult:\lblradius);

\foreach \ind in {1,...,3}\pgfmathparse{90+360/\n*\ind}\coordinate[xshift=5cm,yshift=2cm] (g1\ind) at (\pgfmathresult:\radius);

\foreach \ind in {1,...,3}\pgfmathparse{90+360/\n*\ind}\coordinate[xshift=5cm,yshift=2cm] (v1\ind) at (\pgfmathresult:\lblradius); 

\foreach \ind in {1,...,3}\pgfmathparse{90+360/\n*\ind}\coordinate[xshift=5cm,yshift=-2cm] (g3\ind) at (\pgfmathresult:\radius);

\foreach \ind in {1,...,3}\pgfmathparse{90+360/\n*\ind}\coordinate[xshift=5cm,yshift=-2cm] (v3\ind) at (\pgfmathresult:\lblradius); 

\coordinate (cv1) at (0,0);
\coordinate (cv2) at (5,2);
\coordinate (cv3) at (5,-2);

\draw[black] (g21) -- (cv1) -- (g22);
\draw[black] (cv1) -- (g23);

\draw[black] (g13) -- (cv2) -- (g11) -- (g13) -- (g12) -- (cv2);

\draw[black] (cv3) -- (g33) -- (g32) -- (g31) -- (g33);

\node [place] at (g21) {};
\node at (v21) {$v_3$};
\node [place] at (g22) {};
\node at (v22) {$v_1$};
\node [place] at (g23) {};
\node at (v23) {$v_2$};

\node [place] at (g11) {};
\node at (v11) {$x_2$};
\node [place] at (g12) {};
\node at (v12) {$y_2$};
\node [place] at (g13) {};
\node at (v13) {$z_2$};

\node [place] at (g31) {};
\node at (v31) {$x_2$};
\node [place] at (g32) {};
\node at (v32) {$y_2$};
\node [place] at (g33) {};
\node at (v33) {$z_2$};

\node [gray1] at (cv1) {};
\coordinate [label=center:{$v$}] (cv1l) at (0,-0.3);
\node [gray1] at (cv2) {};
\coordinate [label=center:{$v$}] (cv2l) at (5,1.7);
\node [gray1] at (cv3) {};
\coordinate [label=center:{$v$}] (cv2l) at (5,-2.3);

\coordinate [label=center:{$G_1$}] (G1) at (0,-1);
\coordinate [label=center:{$G_2$}] (G2) at (5,1);
\coordinate [label=center:{$G_3$}] (G3) at (5,-3);

\end{tikzpicture}

\end{center}
\caption{Two $(2,1)$-anonymous transformations $(G_1,G_2)$ and $(G_1,G_3)$. 
The graph $G_2$ satisfies $(2,1)$-anonymity, whereas $G_3$ does not.} 
\label{fig-ex-k-ell-transform}
\end{figure} 

\subsection{The adversary knowledge}

Privacy measures based on $k$-anonymity are defined 
based on a concrete definition of the \emph{adversary knowledge}. 
In $(k, \ell)$-anonymity, an adversary is a set of sybil nodes 
$S \subseteq V(G)$ within a network $G$. The knowledge of such adversary 
about a user $u \in V(G)-S$ is considered to be the metric representation 
$r_G(u | S)$. That is to say, the adversary is capable of determining 
the distance from every attacker node to any other node in the network. 
This is a strong assumption, yet it can be justified by the necessity 
of not underestimating the adversary capabilities. 

In this article we relax the assumption on the adversary knowledge made 
in~\cite{TY2016}. Our decision is based on the fact that all active attacks 
proposed so far~\cite{BDK2007,Peng2012,PLZW2014} rely on the neighbour 
relation between the attacker nodes and the victims. It is indeed 
unrealistic to expect the adversary to rely on arbitrary distances, 
since that would imply knowing the entire adjacency matrix and, especially, 
having the capability to influence whether a relation is established, or not, 
between any pair of users of the network. 

In a manner analogous to the definition of antiresolving sets, 
we use standard concepts from Graph theory to represent an adversary 
that only has knowledge about its neighbours. The concept is known as 
\emph{adjacency representation}, introduced by Jannesari and Omoomi 
\cite{JanOmo2012} and defined as follows. 

\begin{definition}[Adjacency representation]
Given a graph $G=(V,E)$, an ordered set $S = \{s_1, \ldots, s_t\} \subset V$, 
and a vertex $u \in V$, 
the \emph{adjacency metric representation} of $u$ with respect $S$ is the 
vector $a_G(r | S ) = (a_G(s_1,v), \ldots, a_G(s_t,v))$ where $a_G: V(G) \times 
V(G) \rightarrow \mathbb{N}$ is 
defined by:

\begin{equation}
a_G(u,v)=\left\{\begin{array}{rl}
0&\textrm{ if  }u=v\\
1&\textrm{ if  }u\sim_{_G} v\\
2&\textrm{ otherwise}
\end{array}\right.
\end{equation}
\end{definition}

Note that $a_G(u,v)=\min\{2,d_G(u,v)\}$ for every $u,v\in V(G)$. Now, we 
will adapt the notion of $k$-antiresolving sets in order to account for the 
new type of adversary. 

\begin{definition}[$k$-adjacency antiresolving set] 
\label{def-k-adj-antiresolving-set} 
Let $G=(V,E)$ be a non-trivial graph. A set $S\subset V$ is a 
\emph{$k$-adjacency antiresolving set} of $G$ if $k$ is the largest positive 
integer 
such that, for every $v\in V(G)\setminus S$, 
there exist vertices $w_1,w_2,\ldots,w_{k-1}$ such 
that $v,w_1,w_2,\ldots,w_{k-1}$ are pairwise different and 
$$a_G(v\;|\;S)=a_G(w_1\;|\;S)=a_G(w_2\;|\;S)=\ldots=a_G(w_{k-1}\;|\;S).$$
\end{definition}

To illustrate the difference between $k$-adjacency antiresolving sets and 
$k$-antiresolv\-ing sets, consider the graph $G$ depicted 
in Figure~\ref{fig-example1}. 
The set $\{v\}$ is a $2$-antiresolving set of $G$, as $z_1$ and $z_2$ satisfy  
$d(v,z_1)=d(v,z_2)=3$, whereas $d(v,x_1)=d(v,x_2)=d(v,x_3)=d(v,x_4)=1$ and 
$d(v,y_1)=d(v,y_2)=d(v,y_3)=d(v,y_4)=2$. On the other hand, we have that 
$a_{G}(x_1,\{v\})=a_{G}(x_2,\{v\})=a_{G}(x_3,\{v\})=a_{G}(x_4,\{v\})=(1)$, 
while $a_{G}(y_1,\{v\})=a_{G}(y_2,\{v\})=a_{G}(y_3,\{v\})=a_{G}(y_4,\{v\})= 
a_{G}(z_1,\{v\})=a_{G}(z_2,\{v\})=(2)$, so $\{v\}$ is a 
$4$-adjacency 
antiresolving set of $G$.

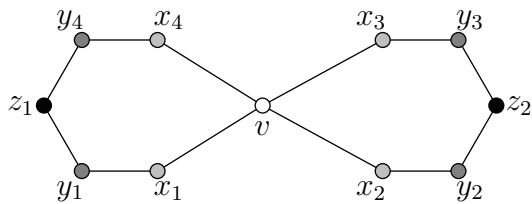
\begin{figure}[!ht]
\begin{center}
\begin{tikzpicture}[inner sep=0.7mm, place/.style={circle,draw=black,
fill=white},xx/.style={circle,draw=black!99, 
fill=black!99},gray1/.style={circle,draw=black!99, 
fill=black!25},gray2/.style={circle,draw=black!99, 
fill=black!50},gray3/.style={circle,draw=black!99,fill=black!75},
transition/.style={rectangle,draw=black!50,fill=black!20,thick}, 
line width=.5pt,scale=0.38]

\def\radius{2.65cm}
\def\lblradius{3.45cm}

\def\n{6}

\foreach \ind in {1,...,5}\pgfmathparse{360/\n*\ind}\coordinate (g2\ind) at 
(\pgfmathresult:\radius);

\foreach \ind in {1,...,5}\pgfmathparse{360/\n*\ind}\coordinate (v2\ind) at (\pgfmathresult:\lblradius);

\draw[black] (g21) -- (g22) -- (g23) -- (g24) -- (g25);

\foreach \ind in {1,...,5}\pgfmathparse{180+360/\n*\ind}\coordinate[xshift=4cm] (g1\ind) at (\pgfmathresult:\radius);

\foreach \ind in {1,...,5}\pgfmathparse{180+360/\n*\ind}\coordinate[xshift=4cm] (v1\ind) at (\pgfmathresult:\lblradius); 

\draw[black] (g11) -- (g12) -- (g13) -- (g14) -- (g15);

\coordinate (cv) at (5,0);

\draw[black] (g21) -- (cv) -- (g25);
\draw[black] (g15) -- (cv) -- (g11);

\node [gray1] at (g21) {};
\node at (v21) {$x_4$};
\node [gray2] at (g22) {};
\node at (v22) {$y_4$};
\node [xx] at (g23) {};
\node at (v23) {$z_1$};
\node [gray2] at (g24) {};
\node at (v24) {$y_1$};
\node [gray1] at (g25) {};
\node at (v25) {$x_1$};

\node [gray1] at (g11) {};
\node at (v11) {$x_2$};
\node [gray2] at (g12) {};
\node at (v12) {$y_2$};
\node [xx] at (g13) {};
\node at (v13) {$z_2$};
\node [gray2] at (g14) {};
\node at (v14) {$y_3$};
\node [gray1] at (g15) {};
\node at (v15) {$x_3$};

\node [place] at (cv) {};
\coordinate [label=center:{$v$}] (cvl) at (5,-0.8);

\end{tikzpicture}

\end{center}
\caption{In this graph, the set $\{v\}$ is a $2$-antiresolving set 
and a $4$-adjacency antiresolving set.}
\label{fig-example1}
\end{figure}

For a graph $G=(V,E)$ and a set $S\subset V$, let $R_{G,S}$ be the equivalence 
relation such that two vertices $u$ and $v$ satisfy $u\ R_{G,S}\ v$ 
if and only if $u,v\in V\setminus S$ and $a_G(u\;|\;S)=a_G(v\;|\;S)$. 
Moreover, 
we will use the notation  
$\mathcal{A}_{G,S}$ for the set of equivalence 
classes induced in $V\setminus S$ by the relation $R_{G,S}$. 
It is simple to 
see that $S$ is a $(\min_{A\in \mathcal{A}_{G,S}}\{|A|\})$-adjacency 
antiresolving set of $G$.

\subsection{Problem statement}

We will first enunciate the notions of $(k,\ell)$-adjacency anonymity 
and $(k,\ell)$-adjacency anonymous transformation, which restrict 
the original definitions of $(k,\ell)$-anonymity and $(k,\ell)$-anonymous 
transformation, to account for adversaries whose knowledge consists 
of the adjacency representations of their victims. 

\begin{definition}[$k$-adjacency antidimension]\label{def-k-adj-antidim} 
The $k$-adjacency antidimension of a graph $G$ is the minimum cardinality of 
a $k$-adjacency antiresolving set of $G$.
\end{definition} 

\begin{definition}[$(k,\ell)$-adjacency anonymity]\label{def-k-ell-adj-anon} 
A graph $G$ satisfies $(k,\ell)$-adjacency ano\-nym\-ity if $k$ is the smallest 
positive integer such that the $k$-adjacency antidimension of $G$ is smaller 
than or equal to $\ell$.
\end{definition}

According to Definition~\ref{def-k-ell-adj-anon}, if a graph $G$ satisfies 
$(k,\ell)$-adjacency anonymity, then for every $S\subset V(G)$ of size 
at most $\ell$ and every $u\in V(G)\setminus S$ there 
exist $v_1,\ldots,v_{k-1}\in V(G)\setminus(S\cup\{u\})$ such that 
$u,v_1,\ldots,v_{k-1}$ are pairwise different and 
$a_G(u\;|\;S)=a_G(v_1\;|\;S)=\ldots=a_G(v_{k-1}\;|\;S)$, so the probability 
of $S$ being able to re-identify $v$ is at most $1/k$. 

It is simple to see that the complete graph $K_n$ and the empty graph 
$N_n=(V,\emptyset)$ satisfy $(n-\ell,\ell)$-adjacency anonymity 
for every $\ell\in\{1,n-1\}$, because for every $S\subset V$ the 
adjacency representation of every other vertex with respect to $S$ is 
either $(1,1,\ldots,1)$ or $(2,2,\ldots,2)$, respectively. In the next  
sections we will introduce results characterising the graphs that satisfy 
$(k,\ell)$-adjacency anonymity for other values of $k$ and $\ell$. 

\begin{definition}[$(k,\ell)$-adjacency anonymous transformation] 
A pair $(G_1, G_2)$ is a \emph{$(k, \ell)$-adjacency anonymous transformation} 
if for every subset $S \subseteq V(G_1) \cap V(G_2)$ 
with $|S| \leq \ell$, $S$ being a $k_1$-adjacency antiresolving set in $G_1$, 
and $S$ being $k_2$-adjacency antiresolving set in $G_2$, it holds that 
$k_1 < k \implies k_2 \geq k$. 
\end{definition}

In a manner analogous to $(k,\ell)$-anonymous transformations, we have 
that if a graph $G$ satisfies $(k,\ell)$-adjacency anonymity, 
then every pair $(G_0,G)$, where $G_0$ is an arbitrary graph, 
is a $(k,\ell)$-adjacency anonymous transformation. 

\begin{definition}[\textbf{Problem statement}]
Let $\loss(G, G')$ be a cost function providing the information loss incurred 
by the graph transformation from $G$ to $G'$. Given a graph $G$, 
and natural numbers $k$ and $\ell$, find $G'$ such that $(G, G')$ 
is a $(k,\ell)$-adjacency anonymous transformation and $\loss(G, G')$ 
is minimum. 
\end{definition} 

\section{$(k,1)$-adjacency anonymous transformations} 
\label{sect-k-1-trans} 


Consider a $(k_0,1)$-adjacency anonymous graph $G$ of order $n$. In order 
to increase the resistance of $G$ to active attackers leveraging one sybil 
node, our interest is to propose $(k,1)$-adjacency transformations 
of the form $(G,G')$ where $k>k_0$. The next result allows us to assess 
the values of $k$ that may be of interest. 

\begin{proposition}\label{prop-too-large-k}
Let $G$ be a non-complete, non-empty graph of order $n$ satisfying 
$(k,1)$-adjacency anonymity. Then, $k\le\left\lfloor\frac{n-1}{2}\right\rfloor$. 
\end{proposition}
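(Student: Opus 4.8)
The plan is to reduce $(k,1)$-adjacency anonymity to an elementary statement about a single vertex degree. Since $\ell=1$, the only antiresolving sets that matter are singletons $S=\{s\}$, and for such a set the adjacency representation $a_G(v\;|\;\{s\})$ of any $v\neq s$ is a one-dimensional vector, equal to $(1)$ if $v\sim s$ and to $(2)$ otherwise. Hence the relation $R_{G,\{s\}}$ partitions $V\setminus\{s\}$ into at most two classes: the neighbours of $s$, of size $\delta(s)$, and the non-neighbours of $s$, of size $n-1-\delta(s)$. By the observation recorded just after Definition~\ref{def-k-adj-antiresolving-set}, the set $\{s\}$ is therefore a $k_s$-adjacency antiresolving set, where $k_s$ is the minimum size among the non-empty classes.

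Next I would locate a vertex whose degree is balanced. I claim that, because $G$ is neither complete nor empty, it must contain a vertex $s$ that is neither isolated nor dominant, that is, one with $1\le\delta(s)\le n-2$. Indeed, if every vertex were isolated then $G=N_n$, and if every vertex were dominant then $G=K_n$, both excluded by hypothesis; moreover a graph cannot contain an isolated vertex and a dominant vertex simultaneously, since a dominant vertex is adjacent to all others. The only remaining possibility is that some vertex $s$ satisfies $1\le\delta(s)\le n-2$.

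For such an $s$ both classes above are non-empty, so
$$k_s=\min\{\delta(s),\,n-1-\delta(s)\}\le\frac{\delta(s)+(n-1-\delta(s))}{2}=\frac{n-1}{2},$$
and since $k_s$ is an integer this yields $k_s\le\left\lfloor\frac{n-1}{2}\right\rfloor$. Finally, by Definition~\ref{def-k-ell-adj-anon} the value $k$ for which $G$ satisfies $(k,1)$-adjacency anonymity is the smallest positive integer admitting a $k$-adjacency antiresolving set of cardinality at most $1$; since $\{s\}$ is such a set with parameter $k_s$, we obtain $k\le k_s\le\left\lfloor\frac{n-1}{2}\right\rfloor$, as desired.

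The argument is essentially routine, and I expect the only step requiring genuine care to be the structural claim in the second paragraph that a non-complete, non-empty graph always possesses a vertex that is neither isolated nor dominant; this is precisely the place where both hypotheses on $G$ are used, and every other step is a direct computation from the definitions.
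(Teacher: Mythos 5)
Your proof is correct and follows essentially the same route as the paper's: both arguments hinge on locating a vertex $s$ that is neither isolated nor dominant, observing that $\{s\}$ splits the remaining vertices into two non-empty classes of sizes $\delta(s)$ and $n-1-\delta(s)$, and concluding that $\{s\}$ is a $k_s$-adjacency antiresolving set with $k_s=\min\{\delta(s),n-1-\delta(s)\}\le\left\lfloor\frac{n-1}{2}\right\rfloor$. The only cosmetic difference is that the paper phrases this as a proof by contradiction while you argue directly (and you spell out the existence of the non-isolated, non-dominant vertex, which the paper merely asserts).
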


\begin{proof}
Let $G=(V,E)$ be a non-complete, non-empty graph of order $n$ satisfying 
$(k,1)$-adjacency anonymity. Suppose, for the purpose of contradiction, 
that $k>\left\lfloor\frac{n-1}{2}\right\rfloor$. Let $v\in V$ be a vertex 
of $G$ satisfying $\delta(v)\notin\{0,n-1\}$. The existence of such a vertex 
is guaranteed by the fact that the graph is not complete nor empty. 
We have that $\mathcal{A}_{G,\{v\}}=\{N_G(v), V\setminus N_G[v]\}$. 
If $\delta(v)\le\left\lfloor\frac{n-1}{2}\right\rfloor$, then 
$\{v\}$ is a $k'$-adjacency antiresolving set of $G$ with $k'<k$, which is 
a contradiction. On the other hand, 
if $\delta(v)>\left\lfloor\frac{n-1}{2}\right\rfloor$, 
then $|V\setminus N_G[v]|\le\left\lfloor\frac{n-1}{2}\right\rfloor$, which 
again means that $\{v\}$ is a $k'$-adjacency antiresolving set of $G$ 
with $k'<k$, a contradiction. Therefore, we have that 
$k\le\left\lfloor\frac{n-1}{2}\right\rfloor$.
\end{proof}

According to Proposition~\ref{prop-too-large-k}, in order to enforce 
$(k,1)$-adjacency anonymity on $G$ for some 
$k>\left\lfloor\frac{n-1}{2}\right\rfloor$, it is necessary to 
transform $G$ into a complete or empty graph, which lacks interest for us 
because such a graph would be completely useless for analysis. Thus, we 
will focus on the values of $k$ in the interval 
$\left[k_0+1,\left\lfloor\frac{n-1}{2}\right\rfloor\right]$.

The following results show the relations between the minimum and maximum 
degrees of a graph and its resistance against active attackers leveraging 
one sybil node. We first introduce some additional notation. 
For a graph $G=(V,E)$, let $I_G=\{v\in V:\;\delta(v)=0\}$ be the set of 
isolated vertices and let $D_G=\{v\in V:\;\delta(v)=n-1\}$ be the set of 
dominant vertices. Clearly, either $I_G=D_G=\emptyset$, or 
$I_G=\emptyset \wedge D_G\neq \emptyset$, or $I_G\neq\emptyset \wedge 
D_G=\emptyset$. With these definitions in mind, we give the 
following three results. 

\begin{proposition}\label{prop-k1-anonymity-noiso-nodom} 
Every non-complete graph $G$ $n$ such that $I_G=\emptyset$ satisfies 
$(k,1)$-adjacency anonymity 
with $k=\min\left\{\delta(G), n-\Delta(G)-1\right\}$. 
\end{proposition}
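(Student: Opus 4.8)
The plan is to use that $(k,1)$-adjacency anonymity quantifies only over vertex subsets of size at most one, so the anonymity value coincides with the smallest equivalence-class size that a single vertex can force. For a singleton $S=\{v\}$, a vertex $w\neq v$ satisfies $a_G(w\mid\{v\})=(1)$ exactly when $w\in N_G(v)$ and $a_G(w\mid\{v\})=(2)$ otherwise; hence $\mathcal{A}_{G,\{v\}}$ consists of at most the two classes $N_G(v)$ and $V\setminus N_G[v]$, and $\{v\}$ is a $k(v)$-adjacency antiresolving set with $k(v)=\min_{A\in\mathcal{A}_{G,\{v\}}}|A|$. Since $\ell=1$, the value realised by the graph is $\min_{|S|\le 1}k(S)$; the empty set contributes the single class $V$ of size $n$, so for a non-complete graph the minimum is attained at a singleton and equals $\min_{v\in V}k(v)$.

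First I would determine $k(v)$ for each vertex. Because $I_G=\emptyset$ — and, as is needed for the formula and signalled by the statement's label, $G$ also has no dominant vertex — every vertex obeys $1\le\delta(v)\le n-2$. Both $N_G(v)$ and $V\setminus N_G[v]$ are then non-empty, with sizes $\delta(v)$ and $n-1-\delta(v)$ respectively, so $k(v)=\min\{\delta(v),\,n-1-\delta(v)\}$ for every $v\in V$.

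Next I would aggregate over $V$. Applying the elementary identity $\min_{v}\min\{a_v,b_v\}=\min\{\min_v a_v,\min_v b_v\}$ with $a_v=\delta(v)$ and $b_v=n-1-\delta(v)$ gives
\[
\min_{v\in V}k(v)=\min\Bigl\{\min_{v\in V}\delta(v),\ \min_{v\in V}\bigl(n-1-\delta(v)\bigr)\Bigr\}=\min\bigl\{\delta(G),\,n-1-\Delta(G)\bigr\},
\]
where the last equality uses $\min_v\delta(v)=\delta(G)$ and $\min_v(n-1-\delta(v))=n-1-\max_v\delta(v)=n-1-\Delta(G)$. This is precisely the claimed $k=\min\{\delta(G),\,n-\Delta(G)-1\}$.

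I expect no serious obstacle: the content is the reduction of the anonymity value to $\min_v k(v)$ together with the observation that, in the no-isolated/no-dominant regime, each singleton splits $V\setminus\{v\}$ into exactly the two non-empty cells $N_G(v)$ and $V\setminus N_G[v]$. The one point not to overlook is the degree restriction $1\le\delta(v)\le n-2$: a dominant vertex would collapse $\mathcal{A}_{G,\{v\}}$ to the single class $V\setminus\{v\}$ of size $n-1$ and invalidate the two-cell count, which is exactly why the hypothesis rules out isolated (and dominant) vertices. Given this, the combinatorial identity finishes the argument immediately.
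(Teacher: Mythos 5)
Your proof is correct and follows essentially the same route as the paper's: for each singleton $\{v\}$ compute $\mathcal{A}_{G,\{v\}}=\{N_G(v),\,V\setminus N_G[v]\}$, deduce $k(v)=\min\{\delta(v),\,n-\delta(v)-1\}$, and minimise over $v$ via the min-of-mins identity to get $\min\{\delta(G),\,n-\Delta(G)-1\}$. You were also right to flag that the hypothesis $D_G=\emptyset$ must be added to the statement (the paper's own proof silently assumes it, opening with ``without isolated or dominant vertices''), since a dominant vertex would make the class $V\setminus N_G[v]$ empty and the claimed formula would degenerate to $0$.
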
 

\begin{proof}
Let $G=(V,E)$ be a graph of order $n$ without isolated or dominant vertices. 
Consider a vertex $v\in V$. Clearly, 
$\mathcal{A}_{G,\{v\}}=\{N_G(v), V\setminus N_G[v]\}$, 
so the set $\{v\}$ is a $(\min\{\delta(v),n-\delta(v)-1\})$-adjacency 
antiresolving set of $G$. In consequence, the graph $G$ satisfies 
$(k,1)$-adjacency anonymity with  
\begin{displaymath} 
\begin{array}{rcl} 
k&=&\displaystyle\min_{v\in V}\left\{\min\left\{\delta(v), 
n-\delta(v)-1\right\}\right\}\\ 
&=&\displaystyle\min\left\{\min_{v\in V}\left\{\delta(v), 
n-\delta(v)-1\right\}\right\}\\ 
&=&\displaystyle\min\left\{\min_{v\in V}\left\{\delta(v)\right\}, 
\min_{v\in V}\left\{n-\delta(v)-1\right\}\right\}\\ 
&=&\displaystyle\min\left\{\delta(G), n-\Delta(G)-1\right\}
\end{array} 
\end{displaymath} 
\end{proof} 

\begin{proposition}\label{prop-k1-anonymity-noiso-yesdom}
Let $G=(V,E)$ be a non-complete graph of order $n$ such that 
$D_G\neq \emptyset$ 
and let $S=V\setminus D_G$. Then, $G$ satisfies $(k,1)$-adjacency 
anonymity with $$k=\min\left\{\delta(G), 
|S|-\Delta(\langle S \rangle_G)-1\right\}.$$
\end{proposition}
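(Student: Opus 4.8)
The plan is to mirror the proof of Proposition~\ref{prop-k1-anonymity-noiso-nodom}: compute the $(k,1)$-adjacency anonymity value as a minimum over the singleton sets $\{v\}$, $v\in V$, and then rewrite that minimum in terms of $\delta(G)$ and $\Delta(\langle S\rangle_G)$. As a first step I would record the equivalence-class structure induced by a single vertex: for any $v$, $\mathcal{A}_{G,\{v\}}$ is obtained by sorting $V\setminus\{v\}$ according to the value $a_G(v,\cdot)\in\{1,2\}$, so its nonempty members are exactly $N_G(v)$ (representation $1$) and $V\setminus N_G[v]$ (representation $2$). Hence $\{v\}$ is a $k_v$-adjacency antiresolving set with $k_v=\min\{|A| : A\in\mathcal{A}_{G,\{v\}},\ A\neq\emptyset\}$, and the graph satisfies $(k,1)$-adjacency anonymity with $k=\min_{v\in V}k_v$.

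Since $D_G\neq\emptyset$, the trichotomy noted just before the statement forces $I_G=\emptyset$, so no vertex is isolated. I would then split the minimum according to whether $v$ is dominant. If $v\in D_G$ then $N_G(v)=V\setminus\{v\}$ and $V\setminus N_G[v]=\emptyset$, leaving a single class of size $n-1$, so $k_v=n-1$. If $v\in S=V\setminus D_G$ then, using $I_G=\emptyset$, both $N_G(v)$ and $V\setminus N_G[v]$ are nonempty, whence $k_v=\min\{\delta(v),\,n-\delta(v)-1\}$. Because $G$ is non-complete we have $\delta(G)\le n-2<n-1$, so the dominant vertices never attain the overall minimum, giving $k=\min_{v\in S}\min\{\delta(v),\,n-\delta(v)-1\}$.

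It remains to evaluate this quantity, which I would split as $\min\{\min_{v\in S}\delta(v),\ n-1-\max_{v\in S}\delta(v)\}$. For the first term, the global minimum degree is attained at a non-dominant vertex (a dominant vertex has degree $n-1$, and a graph all of whose vertices have degree $n-1$ is complete), so $\min_{v\in S}\delta(v)=\delta(G)$; note also $S\neq\emptyset$ precisely because $G$ is non-complete. The crux is the second term: here I would use that every dominant vertex is adjacent to all other vertices, so each $v\in S$ is adjacent to all of $D_G$, whence $\delta_G(v)=|D_G|+\delta_{\langle S\rangle_G}(v)$, the degree of $v$ in $G$ splitting into its $|D_G|$ edges to dominant vertices plus its degree inside the induced subgraph $\langle S\rangle_G$. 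Taking the maximum over $v\in S$ gives $\max_{v\in S}\delta_G(v)=|D_G|+\Delta(\langle S\rangle_G)$, and substituting $n=|S|+|D_G|$ yields $n-1-\max_{v\in S}\delta_G(v)=|S|-\Delta(\langle S\rangle_G)-1$. Combining the two terms gives $k=\min\{\delta(G),\,|S|-\Delta(\langle S\rangle_G)-1\}$, as claimed.

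The main obstacle is this last computation: recognising that the relevant maximum degree is not $\Delta(G)$ (as in the dominant-free case of Proposition~\ref{prop-k1-anonymity-noiso-nodom}) but the maximum degree of the subgraph induced by the non-dominant vertices, which is exactly what the decomposition $\delta_G(v)=|D_G|+\delta_{\langle S\rangle_G}(v)$ isolates. Once that identity is in hand, the conversion between $n$ and $|S|+|D_G|$ and the handling of the dominant vertices are routine bookkeeping.
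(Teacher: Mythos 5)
Your proof is correct and takes essentially the same approach as the paper's: reduce to the singleton sets $\{v\}$, observe that a dominant vertex yields the single class $V\setminus\{v\}$ of size $n-1$ (which cannot attain the minimum), and for $v\in S$ convert $n-\delta_G(v)-1$ into $|S|-\delta_{\langle S\rangle_G}(v)-1$ --- the paper does this via the set identity $V\setminus N_G[v]=S\setminus N_{\langle S\rangle_G}[v]$, which is exactly your degree decomposition $\delta_G(v)=|D_G|+\delta_{\langle S\rangle_G}(v)$ in disguise. If anything, your write-up is slightly more explicit about the bookkeeping (why $I_G=\emptyset$, why $S\neq\emptyset$, and why the $n-1$ term drops out).
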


\begin{proof}
We follow a reasoning analogous to that of the proof of 
Proposition~\ref{prop-k1-anonymity-noiso-nodom}. First, consider 
a vertex $v\in D_G$. We have 
that $\mathcal{A}_{G,\{v\}}=\{V\setminus\{v\}\}$. 
Now, consider a vertex $v\in S$. 
In this case, $\mathcal{A}_{G,\{v\}}=\{N_G(v), V\setminus N_G[v]\}= 
\{N_G(v), S\setminus N_{\langle S \rangle_G}[v]\}$. In consequence, 
we have that $G$ satisfies $(k,1)$-adjacency anonymity with 

\begin{displaymath} 
\begin{array}{rcl} 
k&=&\displaystyle\min\left\{n-1, \min_{v\in S}\left\{\min\left\{\delta_G(v), 
|S|-\delta_{\langle S \rangle_G}(v)-1\right\}\right\}\right\}\\ 
&=&\displaystyle\min_{v\in S}\left\{\min\left\{\delta_G(v), 
|S|-\delta_{\langle S \rangle_G}(v)-1\right\}\right\}\\
&=&\displaystyle\min\left\{\min_{v\in S}\left\{\delta_G(v), 
|S|-\delta_{\langle S \rangle_G}(v)-1\right\}\right\}\\ 
&=&\displaystyle\min\left\{\min_{v\in S}\left\{\delta_G(v)\right\}, 
\min_{v\in S}\left\{|S|-\delta_{\langle S \rangle_G}(v)-1\right\}\right\}\\ 
&=&\displaystyle\min\left\{\delta(G), |S|-\Delta(\langle S \rangle_G)-1
\right\}
\end{array} 
\end{displaymath} 
\end{proof}

\begin{proposition}\label{prop-k1-anonymity-yesiso-nodom}
Let $G=(V,E)$ be a non-empty graph of order $n$ such that $I_G\neq \emptyset$ 
and let $S=V\setminus I_G$. Then, $G$ satisfies $(k,1)$-adjacency anonymity 
with $$k=\min\left\{\delta(\langle S \rangle_G), n-\Delta(G)-1\right\}.$$
\end{proposition}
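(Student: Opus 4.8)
The plan is to follow the same structure as the proofs of Propositions~\ref{prop-k1-anonymity-noiso-nodom} and~\ref{prop-k1-anonymity-noiso-yesdom}: for each vertex $v\in V$, determine the integer $k_v$ for which $\{v\}$ is a $k_v$-adjacency antiresolving set, and then take the minimum over all $v$, which by Definition~\ref{def-k-ell-adj-anon} gives the value $k$ for which $G$ satisfies $(k,1)$-adjacency anonymity. Since $I_G\neq\emptyset$, we are necessarily in the case $D_G=\emptyset$, so no vertex of $G$ is dominant. I would split $V$ into the isolated set $I_G$ and the non-isolated set $S=V\setminus I_G$ and treat each separately. Note that $S\neq\emptyset$, since $G$ is non-empty.

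First I would handle $v\in I_G$. Here $N_G(v)=\emptyset$ and $N_G[v]=\{v\}$, so every other vertex has adjacency representation $(2)$ with respect to $\{v\}$; hence $\mathcal{A}_{G,\{v\}}=\{V\setminus\{v\}\}$ is a single class of size $n-1$, and $\{v\}$ is an $(n-1)$-adjacency antiresolving set. Next, for $v\in S$, the vertex is neither isolated nor dominant, so both $N_G(v)$ and $V\setminus N_G[v]$ are non-empty, giving $\mathcal{A}_{G,\{v\}}=\{N_G(v), V\setminus N_G[v]\}$; thus $\{v\}$ is a $\min\{\delta_G(v), n-\delta_G(v)-1\}$-adjacency antiresolving set, exactly as in Proposition~\ref{prop-k1-anonymity-noiso-nodom}. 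Taking the overall minimum and observing that the $n-1$ contributed by isolated vertices is absorbed (each term coming from $S$ is at most $n-2$), I obtain
\[
k=\min_{v\in S}\left\{\min\left\{\delta_G(v),\, n-\delta_G(v)-1\right\}\right\}=\min\left\{\min_{v\in S}\{\delta_G(v)\},\; n-1-\max_{v\in S}\{\delta_G(v)\}\right\}.
\]

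The key observation, and the step that distinguishes this case from the earlier propositions, is that for $v\in S$ the $G$-degree and the $\langle S\rangle_G$-degree coincide. Indeed, if $w\sim_G v$ then $\delta_G(w)\ge 1$, so $w\notin I_G$ and hence $w\in S$; therefore $N_G(v)\subseteq S$ and $\delta_{\langle S\rangle_G}(v)=\delta_G(v)$. This lets me replace $\min_{v\in S}\{\delta_G(v)\}$ by $\delta(\langle S\rangle_G)$. For the second term, since $G$ is non-empty some vertex has positive degree, so the maximum degree is attained on $S$ and $\max_{v\in S}\{\delta_G(v)\}=\Delta(G)$, giving $n-1-\max_{v\in S}\{\delta_G(v)\}=n-\Delta(G)-1$. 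Substituting both identities yields $k=\min\{\delta(\langle S\rangle_G),\, n-\Delta(G)-1\}$, as claimed.

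I expect the main obstacle to be conceptual rather than computational: recognising that it is $\delta(\langle S\rangle_G)$, and not $\delta(G)$, that appears. The presence of isolated vertices would force $\delta(G)=0$ and collapse the bound to $0$, which is why the statement is expressed in terms of the induced subgraph; the argument above shows this substitution is legitimate precisely because the neighbours of a non-isolated vertex are themselves non-isolated, so passing to $\langle S\rangle_G$ leaves their degrees unchanged.
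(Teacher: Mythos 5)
Your proposal is correct and follows essentially the same route as the paper: treat $v\in I_G$ (single class $V\setminus\{v\}$ of size $n-1$) and $v\in S$ (classes $N_G(v)$ and $V\setminus N_G[v]$) separately, then take the minimum over all singletons. The paper states the identity $N_G(v)=N_{\langle S\rangle_G}(v)$ for $v\in S$ without comment, whereas you justify it explicitly (neighbours of a non-isolated vertex are non-isolated); that is the only difference, and it is a welcome one.
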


\begin{proof}
We follow a reasoning analogous to that of the proofs 
of Propositions~\ref{prop-k1-anonymity-noiso-nodom} 
and~\ref{prop-k1-anonymity-noiso-yesdom}. 
First, consider a vertex $v\in I_G$. We have that 
$\mathcal{A}_{G,\{v\}}=\{V\setminus\{v\}\}$. 
Now, consider a vertex $v\in S$. In this case, 
$\mathcal{A}_{G,\{v\}}=\{N_G(v), V\setminus N_G[v]\}= 
\{N_{\langle S \rangle_G}(v), V\setminus N_G[v]\}$. 
In consequence, we have that $G$ satisfies $(k,1)$-adjacency anonymity with 

\begin{displaymath} 
\begin{array}{rcl} 
k&=&\displaystyle\min\left\{n-1, \min_{v\in S}\left\{ 
\min\left\{\delta_{\langle S \rangle_G}(v), 
n-\delta_G(v)-1\right\}\right\}\right\}\\ 
&=&\displaystyle\min_{v\in S}\left\{\min\left\{\delta_{\langle S \rangle_G}(v), 
n-\delta_G(v)-1\right\}\right\}\\
&=&\displaystyle\min\left\{\min_{v\in S}\left\{\delta_{\langle S \rangle_G}(v), 
n-\delta_G(v)-1\right\}\right\}\\ 
&=&\displaystyle\min\left\{\min_{v\in S}\left\{
\delta_{\langle S \rangle_G}(v)\right\}, 
\min_{v\in S}\left\{n-\delta_G(v)-1\right\}\right\}\\ 
&=&\displaystyle\min\left\{\delta(\langle S \rangle_G), n-\Delta(G)-1\right\}
\end{array} 
\end{displaymath} 
\end{proof}

According to Propositions~\ref{prop-k1-anonymity-noiso-nodom}, 
\ref{prop-k1-anonymity-noiso-yesdom} 
and~\ref{prop-k1-anonymity-yesiso-nodom}, 
in order to enforce $(k,1)$-adjacency anonymity on a graph $G$, it is necessary 
to transform it into a graph $G'$ such that the induced subgraph 
$\langle V(G')\setminus (D_{G'}\cup I_{G'})\rangle_{G'}$ has minimum 
degree greater than or equal to $k$ and maximum degree smaller than 
or equal to $n-k-1$, where $n$ is the order of $G'$. Likewise, in order to 
obtain a $(k,1)$-adjacency anonymous transformation $(G=(V,E),G'=(V',E'))$, 
it is necessary to guarantee that every $v\in V$ such that $1\le\delta_G(v)<k$ 
or $|V|-k-1<\delta_G(v)\le |V|-2$ satisfies $v\notin V'$, or $v\in D_{G'}$,  
or $v\in I_{G'}$, or $k\le \delta_{G'}(v)\le |V'|-k-1$. Based on these facts, 
we propose an algorithm that, given a $(k_0,1)$-adjacency anonymous graph 
$G=(V,E)$ and an integer $k$ such that 
$k_0 < k \le\left\lfloor\frac{|V|-1}{2}\right\rfloor$,  
efficiently obtains a graph $G'=(V,E')$ such that the pair $(G,G')$ 
is a $(k,1)$-adjacency anonymous transformation. The method works by 
performing a series of edge additions and removals upon $G$, as outlined 
in Algorithm~\ref{alg-k-1-anonymous-transf}. 

\begin{algorithm}
\caption{Given a $(k_0,1)$-adjacency anonymous graph $G=(V,E)$ and an integer 
$k\in\left[k_0+1,\left\lfloor\frac{|V|-1}{2}\right\rfloor\right]$, 
obtain a graph $G'=(V,E')$ such that $(G,G')$ 
is a $(k,1)$-adjacency anonymous transformation.} 
\label{alg-k-1-anonymous-transf} 
\begin{algorithmic}[1] 
\STATE $E'\gets E$ 
\STATE\label{st-update-L-1} $L\gets \{v\in V:\;1\le\delta_G(v)<k\}$
\STATE\label{st-update-H-1} $H\gets \{v\in V:\;|V|-k-1<\delta_G(v)\le|V|-2\}$ 
\WHILE {$L\neq\emptyset$}\label{start-L}
\STATE\label{st-search-1} $X\gets \{x\in L:\;\exists y\in L\setminus\{x\} 
\st (x,y)\notin E'\}$
\IF {$X\neq\emptyset$}
\STATE\label{st-search-2} $u\gets {\arg\max}_{x\in X}\{\delta_{G'}(x)\}$
\STATE\label{st-search-3} $Y\gets \{y\in L\setminus\{u\}:\;(u,y)\notin E'\}$
\STATE\label{st-search-4} $v\gets {\arg\max}_{y\in Y}\{\delta_{G'}(y)\}$
\STATE\label{st-last-step-opt} $E'\gets E'\cup\{(u,v)\}$
\ELSE 
\STATE\label{st-search-5} $u\gets {\arg\max}_{x\in L}\{\delta_{G'}(x)\}$
\STATE\label{st-search-6} $Y\gets \{y\in V\setminus L:\;(u,y)\notin E'\}$
\STATE\label{st-search-7} $v\gets {\arg\min}_{y\in Y}\{\delta_{G'}(y)\}$ 
\STATE $E'\gets E'\cup\{(u,v)\}$
\ENDIF
\STATE\label{st-update-L-2} $L\gets \{v\in L:\;1\le\delta_{G'}(v)<k\}$ 
\ENDWHILE \label{end-L} 
\STATE\label{st-update-H-2} $H\gets \{v\in H:\;|V|-k-1<\delta_{G'}(v)\le|V|-2\}$ 
\WHILE {$H\neq\emptyset$}\label{start-H}
\STATE\label{st-search-9} $X\gets \{x\in H:\;\exists y\in H\setminus\{x\} 
\st (x,y)\in E'\}$
\IF {$X\neq\emptyset$}
\STATE\label{st-search-10} $u\gets {\arg\min}_{x\in X}\{\delta_{G'}(x)\}$
\STATE\label{st-search-11} $Y\gets \{y\in H\setminus\{u\}:\;(u,y)\in E'\}$
\STATE\label{st-search-12} $v\gets {\arg\min}_{y\in Y}\{\delta_{G'}(y)\}$
\STATE\label{st-last-step-opt-H} $E'\gets E'\setminus\{(u,v)\}$
\ELSE 
\STATE\label{st-search-13} $u\gets {\arg\min}_{x\in H}\{\delta_{G'}(x)\}$
\STATE\label{st-search-14} $Y\gets \{y\in V\setminus(H\cup L):
\;(u,y)\in E'\}$
\STATE\label{st-search-15} $v\gets {\arg\max}_{y\in Y}\{\delta_{G'}(y)\}$ 
\STATE $E'\gets E'\setminus\{(u,v)\}$
\ENDIF
\STATE\label{st-update-H-3} $H\gets \{v\in H:\;|V|-k-1<\delta_{G'}(v)\le|V|-2\}$  
\ENDWHILE \label{end-H} 
\RETURN $G'$
\end{algorithmic} 
\end{algorithm} 

In Algorithm~\ref{alg-k-1-anonymous-transf}, the sets $L$ and $H$ contain 
the vertices whose degrees are, respectively, smaller and greater 
than required for the privacy requirement to be satisfied (without 
being isolated nor dominant vertices). 
Steps~\ref{st-update-L-1}, \ref{st-search-1}, \ref{st-search-2}, 
\ref{st-search-3}, \ref{st-search-4}, \ref{st-search-5}, \ref{st-search-6}, 
\ref{st-search-7}, \ref{st-update-L-2}, \ref{st-update-H-2}, 
\ref{st-search-9}, \ref{st-search-10}, \ref{st-search-11}, 
\ref{st-search-12}, \ref{st-search-13}, \ref{st-search-14}, \ref{st-search-15} 
and~\ref{st-update-H-3} can be efficiently performed by maintaining 
the elements of $V$ sorted 
by their degree and updating the ordering when necessary. The rationale 
behind the loop in steps~\ref{start-L} to~\ref{end-L} is to first add 
as many edges as possible between pairs of vertices from $L$, 
since every addition of this type increases the degree of two of such vertices. 
When such additions are no longer possible, then 
we add edges linking a vertex $u\in L$ and a vertex $v\notin L$ 
whose degree is as small as possible. The latter condition makes that the 
degree of vertices from $H$ is only increased if there is no vertex 
in $V\setminus H$ to which $u$ can be linked. An analogous idea is applied in 
the loop in steps~\ref{start-H} to~\ref{end-H} to first remove edges joining 
pairs of vertices from $H$, then edges joining a vertex from $H$ to other 
vertex (with the particularity that step~\ref{st-search-14} takes care 
of not making the degree of a vertex from $L$ decrease again), and so on. 
It is worth noting that in real-life social graphs, which are characterised 
by very low densities, and for practical values of $k$, 
steps~\ref{start-H} to~\ref{end-H} are very unlikely to be 
executed. 

Considering the number of modifications performed by the algorithm, the 
best scenario is when all edge additions are done according to 
steps~\ref{st-search-1} to~\ref{st-last-step-opt}, and all edge removals 
are done according to steps~\ref{st-search-9} to~\ref{st-last-step-opt-H}, 
as shown in the following results. 

\begin{theorem}\label{th-bounds-added}
Let $G=(V,E)$ be a $(k_0,1)$-adjacency anonymous social graph 
and let $k\in\left[k_0+1,\left\lfloor\frac{|V|-1}{2}\right\rfloor\right]$. 
The number $t$ of edges added by steps~\ref{start-L} to~\ref{end-L} 
of Algorithm~\ref{alg-k-1-anonymous-transf} satisfies 
\begin{equation}\label{eq-bounds-added}
\left\lceil\frac{\displaystyle\sum_{u\in V,\ 1\le \delta_G(u)<k}k-\delta_G(u)}
{2}\right\rceil \le t \le \sum_{u\in V,\ 1\le \delta_G(u)<k}k-\delta_G(u) 
\end{equation}
\end{theorem}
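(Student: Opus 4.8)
The plan is to control $t$ through a potential argument that tracks the total degree deficit of the vertices that initially populate $L$. Write $D=\sum_{u\in V,\ 1\le\delta_G(u)<k}(k-\delta_G(u))$ for the right-hand side of \eqref{eq-bounds-added}, and let $\Phi=\sum_{v\in L}(k-\delta_{G'}(v))$ denote the current deficit, where $L$ and $G'$ are the running set and graph maintained by the algorithm. Since $E'\gets E$ and $L=\{v:\;1\le\delta_G(v)<k\}$ at entry, we have $\Phi=D$ initially, and (as argued in the last paragraph) the loop halts with $L=\emptyset$, so $\Phi=0$ at the end. Each iteration adds exactly one edge, and I classify it as a \emph{Case A} addition (steps~\ref{st-search-1}--\ref{st-last-step-opt}, both endpoints in $L$) or a \emph{Case B} addition (steps~\ref{st-search-5}--\ref{st-search-7}, with $u\in L$ and $v\in V\setminus L$). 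Let $a$ and $b$ be the respective counts, so $t=a+b$.

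First I would establish the accounting identity $2a+b=D$. The two relevant facts are that the update step~\ref{st-update-L-2} only \emph{filters} $L$ and never inserts vertices, and that adding an edge raises the degree of each endpoint by exactly $1$; hence a vertex can leave $L$ only when its degree first reaches $k$, with no overshoot (the lower condition $1\le\delta_{G'}(v)$ is automatically preserved since degrees never decrease in this loop). Consequently a Case A addition lowers the deficit of two vertices that are still in $L$ by one each, decreasing $\Phi$ by exactly $2$, whereas a Case B addition lowers the deficit of the single vertex $u\in L$ by one, the other endpoint lying in $V\setminus L$ and contributing nothing to $\Phi$, so $\Phi$ decreases by exactly $1$. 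Summing these exact decrements over the whole run and using that $\Phi$ goes from $D$ to $0$ gives $2a+b=D$.

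Given the identity, the bounds are pure arithmetic. From $t=a+b$ and $2a+b=D$ we obtain $t=D-a$; then $a\ge 0$ yields $t\le D$, while $b=D-2a\ge 0$ yields $a\le\lfloor D/2\rfloor$, whence $t=D-a\ge D-\lfloor D/2\rfloor=\lceil D/2\rceil$. This is exactly the inequality~\eqref{eq-bounds-added}.

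The main obstacle is the claim used implicitly above, that the loop always makes progress and terminates with $L=\emptyset$, so that $\Phi$ genuinely reaches $0$ and the decrements really sum to $D$. For this I would check that whenever $L\neq\emptyset$ an edge can be added: if $X\neq\emptyset$ a Case A edge exists by construction, and if $X=\emptyset$ then $\langle L\rangle_{G'}$ is complete, so the chosen vertex $u\in L$ is adjacent to every other vertex of $L$ and all of its non-neighbours lie in $V\setminus L$; there are $|V|-1-\delta_{G'}(u)\ge |V|-k$ of them, which is positive because $k\le\left\lfloor\frac{|V|-1}{2}\right\rfloor$, so $Y\neq\emptyset$ and a Case B edge exists. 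Since every addition strictly decreases the positive integer $\Phi$, the loop runs at most $D$ times and ends with $\Phi=0$, i.e. $L=\emptyset$, which closes the argument.
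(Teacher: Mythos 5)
Your proof is correct and rests on the same key idea as the paper's: the paper defines the identical deficit potential $missing(G_i)=\sum_{x\in L_i}(k-\delta_{G_i}(x))$ and observes, exactly as you do, that each addition decreases it by $2$ (both endpoints in $L$) or by $1$ (one endpoint in $L$). The only differences are cosmetic improvements on your side: you replace the paper's proof-by-contradiction with its even/odd case split by the cleaner exact identity $2a+b=D$, and you supply the termination argument (that the loop really ends with $L=\emptyset$), which the paper takes for granted.
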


\begin{proof}
Let $((u_1,v_1),(u_2,v_2),\ldots,(u_t,v_t))$, 
with $(u_i,v_i)\in (V\times V)\setminus E$ for $i\in\{1,\ldots,t\}$, be 
the  sequence of edges added to $G$ by steps~\ref{start-L} to~\ref{end-L} 
of Algorithm~\ref{alg-k-1-anonymous-transf}.  
Let $E_0=E$ and $E_i=E_{i-1}\cup\{(u_i,v_i)\}$, for $i\in\{1,\ldots,t\}$.  
Moreover, for every $i\in\{0,\ldots,t\}$, let $G_i=(V,E_i)$ and 
$L_i=\{v\in L:\;1\le\delta_{G_i}(v)<k\}$. 

After adding the edge $(u_i,v_i)$, we have that 
$\delta_{G_i}(u_i)=\delta_{G_{i-1}}(u_i)+1$ and 
$\delta_{G_i}(v_i)=\delta_{G_{i-1}}(v_i)+1$, whereas 
$\delta_{G_i}(x)=\delta_{G_{i-1}}(x)$ for every 
$x\in V-\{u_i,v_i\}$. 

We define the function 
\begin{displaymath}
missing(G_i)=\sum_{x\in L_i}(k-\delta_{G_i}(x))
\end{displaymath}
which specifies by how much the sum of the degrees of vertices from $L$ 
needs to be increased for $(G,G_i)$ to be a $(k,1)$-adjacency anonymous 
transformation. 
Note that, by the definition of $t$, we have that $missing(G_t)=0$. Moreover, 
$missing(G_0)=\displaystyle\sum_{u\in V,\ 1\le \delta_G(u)<k}k-\delta_G(u)$. 
After adding the edge $(u_i,v_i)$, the following situations are possible:

\begin{enumerate}
\item $u_i,v_i\in L_{i-1}$. In this case, since two vertices from $L_{i-1}$ 
have their degree increased by $1$, we have that  
$missing(G_i)=missing(G_{i-1})-2$. 
\item $u_i\in L_{i-1}$ and $v_i\notin L_{i-1}$, or 
\emph{vice versa}. Here, $missing(G_i)=missing(G_{i-1})-1$. 
\end{enumerate} 

With the previous definitions in mind, we will address the proof 
of the left-hand inequality in Equation~\ref{eq-bounds-added}. 
To that end, we will assume, for the purpose of contradiction, that 
$$t<\left\lceil\frac{\sum_{u\in V,\ 1\le \delta_G(u)<k}k-\delta_G(u)}{2} 
\right\rceil=\left\lceil\frac{missing(G_0)}{2}\right\rceil.$$ 

If $missing(G_0)$ is even, we have that $t<\frac{missing(G_0)}{2}$. 
Given that, in the best case scenario, situation 1 above occurs 
at every iteration of the algorithm, we have  

\begin{displaymath}
\begin{array}{rcl}
missing(G_t)&\geq&missing(G_0)-2t\\
&>&missing(G_0)-2\cdot\frac{missing(G_0)}{2}\\
&=&0
\end{array}
\end{displaymath}
which is a contradiction. 

In a similar manner, if 
$missing(G_0)$ is odd, we have that $t<\frac{missing(G_0)+1}{2}$. Here, 
in the best case scenario, situation 1 above occurs in every iteration, 
except one, so 

\begin{displaymath}
\begin{array}{rcl}
missing(G_t)&\geq&missing(G_0)-2(t-1)-1\\
&>&missing(G_0)-2\left(\frac{missing(G_0)+1}{2}-1\right)-1\\
&=&0
\end{array}
\end{displaymath}
which is also a contradiction. Thus, we can conclude that 
$$t\ge\left\lceil\frac{\sum_{u\in V,\ 1\le \delta_G(u)<k}k-\delta_G(u)}{2} 
\right\rceil.$$ 

The right-hand inequality in Equation~\ref{eq-bounds-added} is trivial, 
given that at least one vertex has its degree increased by $1$ at every 
iteration. The proof is thus complete. 
\end{proof}

The lower and upper bounds provided in Theorem~\ref{th-bounds-added} are 
tight, as exemplified in Figures~\ref{fig-bounds-added}.a) 
and~\ref{fig-bounds-added}.b), respectively. 

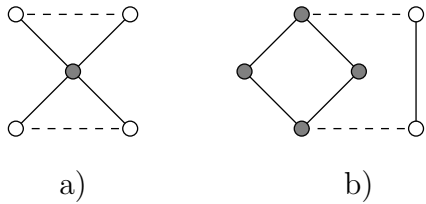
\begin{figure}[!ht]
\begin{center}

\begin{tikzpicture}[inner sep=0.7mm, place/.style={circle,draw=black,
fill=white},xx/.style={circle,draw=black!99, 
fill=black!99},gray1/.style={circle,draw=black!99, 
fill=black!25},gray2/.style={circle,draw=black!99, 
fill=black!50},gray3/.style={circle,draw=black!99,fill=black!75},
transition/.style={rectangle,draw=black!50,fill=black!20,thick}, 
line width=.5pt,scale=0.38]

\coordinate (cv) at (0,0);

\coordinate (v1) at (2,2);
\coordinate (v2) at (-2,2);
\coordinate (v3) at (-2,-2);
\coordinate (v4) at (2,-2);

\draw[black] (v1) -- (v3);
\draw[black] (v2) -- (v4);

\draw[black,dashed] (v1) -- (v2);
\draw[black,dashed] (v3) -- (v4);

\node [gray2] at (cv) {};
\node [place] at (v1) {};
\node [place] at (v2) {};
\node [place] at (v3) {};
\node [place] at (v4) {};

\coordinate [label=center:{a)}] (G1) at (0,-4);

\coordinate (cv2) at (10,0);

\coordinate (v12) at (12,2);
\coordinate (v22) at (8,2);
\coordinate (v32) at (8,-2);
\coordinate (v42) at (12,-2);
\coordinate (v52) at (6,0);

\draw[black] (v22) -- (cv2) -- (v32) -- (v52) -- cycle;
\draw[black] (v12) -- (v42);

\draw[black,dashed] (v12) -- (v22);
\draw[black,dashed] (v32) -- (v42);

\node [gray2] at (cv2) {};
\node [place] at (v12) {};
\node [gray2] at (v22) {};
\node [gray2] at (v32) {};
\node [place] at (v42) {};
\node [gray2] at (v52) {};

\coordinate [label=center:{b)}] (G12) at (10,-4);

\end{tikzpicture}

\end{center}
\caption{Two examples where the number of edges added by steps~\ref{start-L} 
to~\ref{end-L} of Algorithm~\ref{alg-k-1-anonymous-transf} (for $k=2$) 
reaches the (a) lower and (b) upper bounds of the inequalities 
in Equation~\ref{eq-bounds-added}. 
In both cases, dashed lines indicate the edges added by the algorithm.} 
\label{fig-bounds-added}
\end{figure}

The next result describes the number 
$t'$ of edges removed by steps~\ref{start-H} to~\ref{end-H} 
of Algorithm~\ref{alg-k-1-anonymous-transf}. 

\begin{theorem}\label{th-bounds-removed}  
Let $G=(V,E)$ be a $(k_0,1)$-adjacency anonymous social graph 
and let $k\in\left[k_0+1,\left\lfloor\frac{|V|-1}{2}\right\rfloor\right]$. 
Let $G_t$ be the graph obtained from $G$ after executing steps~\ref{start-L} 
to~\ref{end-L} of Algorithm~\ref{alg-k-1-anonymous-transf}. The number 
$t'$ of edges removed by steps~\ref{start-H} to~\ref{end-H} 
of Algorithm~\ref{alg-k-1-anonymous-transf} satisfies 
\begin{equation}\label{eq-lower-bound-removed} 
t'\ge\left\lceil\frac{\displaystyle\sum_{u\in V,\ |V|-k-1<\delta_G(u)\le|V|-2} 
\left[k-\left(|V|-\delta_{G_t}(u)-1\right)\right]}{2}\right\rceil 
\end{equation} 
and 
\begin{equation}\label{eq-upper-bound-removed} 
t' \le \sum_{u\in V,\ 
|V|-k-1<\delta_G(u)\le|V|-2} \left[k-\left(|V|-\delta_{G_t}(u)-1\right)\right] 
\end{equation} 
\end{theorem}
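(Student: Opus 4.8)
The plan is to replay the potential-function argument of Theorem~\ref{th-bounds-added} on the high-degree side of Algorithm~\ref{alg-k-1-anonymous-transf}. Writing $n=|V|$, I would record the sequence $((u_1,v_1),\ldots,(u_{t'},v_{t'}))$ of edges deleted by steps~\ref{start-H}--\ref{end-H}, set $E'_0=E(G_t)$, $E'_i=E'_{i-1}\setminus\{(u_i,v_i)\}$, $G'_i=(V,E'_i)$, and $H_i=\{v\in H:\;n-k-1<\delta_{G'_i}(v)\le n-2\}$, where $H$ denotes the set as refreshed at step~\ref{st-update-H-2}. For a high-degree vertex the binding constraint in Proposition~\ref{prop-k1-anonymity-noiso-yesdom} is that its non-neighbourhood class, of size $n-\delta(v)-1$, must reach $k$; accordingly I would introduce the potential $$excess(G'_i)=\sum_{x\in H_i}\left[k-\bigl(n-\delta_{G'_i}(x)-1\bigr)\right],$$ which equals $0$ exactly when the loop halts, so that $excess(G'_{t'})=0$.

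I would then describe the effect of a single deletion. Removing $(u_i,v_i)$ lowers $\delta(u_i)$ and $\delta(v_i)$ by one and fixes every other degree, hence raises the class size $n-\delta-1$ by one for each endpoint and lowers its deficit term by one. This produces the two situations mirroring Theorem~\ref{th-bounds-added}: when $X\neq\emptyset$ at step~\ref{st-search-9} both endpoints lie in $H_{i-1}$ (steps~\ref{st-search-10}--\ref{st-last-step-opt-H}) and $excess$ drops by $2$, whereas in the else-branch the partner $v_i\in V\setminus(H\cup L)$ lies outside $H_{i-1}$ (steps~\ref{st-search-13}--\ref{st-search-15}) and $excess$ drops by exactly $1$. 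In particular every deletion decreases $excess$ by at least one and by at most two.

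Both inequalities then follow by telescoping, exactly as in Theorem~\ref{th-bounds-added}. Since each step decreases $excess$ by at least one, $0=excess(G'_{t'})\le excess(G'_0)-t'$, which is the upper bound of Equation~\ref{eq-upper-bound-removed}. Since each step decreases $excess$ by at most two, the parity argument of Theorem~\ref{th-bounds-added} --- best case, the doubly efficient step~\ref{st-last-step-opt-H} occurs at every iteration, or all but one when $excess(G'_0)$ is odd --- contradicts any $t'<\lceil excess(G'_0)/2\rceil$ and yields the lower bound of Equation~\ref{eq-lower-bound-removed}.

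The hard part is not the telescoping but identifying $excess(G'_0)$ with the stated sum, which ranges over the \emph{original} high-degree set $\{u:\;n-k-1<\delta_G(u)\le n-2\}$ yet is evaluated at the \emph{post-addition} degrees $\delta_{G_t}$. The reconciliation rests on the fact that steps~\ref{start-L}--\ref{end-L} only add edges, so $\delta_{G_t}(u)\ge\delta_G(u)$ for every such $u$; hence no originally high-degree vertex can have fallen below the threshold, and the filter at step~\ref{st-update-H-2} discards from the potential only those vertices that the addition phase pushed to degree $n-1$. Each such vertex is dominant in $G_t$, contributes a term $k$ to the stated sum but needs no deletion, and so can only enlarge the right-hand sides: the upper bound therefore holds unconditionally, while the lower bound is valid and tight precisely in the generic case where the addition phase creates no dominant vertex (already noted to be the typical situation for sparse graphs). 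A last point to verify is that the two phases do not interfere --- the exclusion of $L$ from the candidate set $Y$ at step~\ref{st-search-14} ensures that no deletion recreates a low-degree vertex, so $excess$ is genuinely monotone and $t'$ is well defined.
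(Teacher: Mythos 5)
Your proposal is correct and follows essentially the same route as the paper: the same potential function $excess$, the same two-case analysis of a single deletion's effect (drop by $2$ when both endpoints lie in $H_{i-1}$, by $1$ otherwise), and the same telescoping and parity contradiction yielding the two bounds. The only divergence is your closing paragraph on reconciling the initial potential with the stated sum over the \emph{original} high-degree set evaluated at the \emph{post-addition} degrees $\delta_{G_t}$ --- the paper simply asserts this equality without comment, so your observation that a vertex pushed to dominance by the addition phase would contribute $k$ to the stated sum while being filtered out of $H$ at step~\ref{st-update-H-2} (so that the lower bound is only guaranteed when no such vertex arises) is a point of care that the published proof silently skips.
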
 

\begin{proof}
We will follow a reasoning analogous to the one applied in the proof of 
Theorem~\ref{th-bounds-added}. 
Let $((u_1,v_1),(u_2,v_2),\ldots,(u_{t'},v_{t'}))$, 
with $(u_i,v_i)\in E_t=E(G_t)$ for $i\in\{1,\ldots,t'\}$, be 
the sequence of edges removed from $G_t$ by steps~\ref{start-H} to~\ref{end-H} 
of Algorithm~\ref{alg-k-1-anonymous-transf}.  
Let $E_{t+i}=E_{t+i-1}\setminus\{(u_i,v_i)\}$, for $i\in\{1,\ldots,t'\}$.  
Moreover, for every $i\in\{1,\ldots,t'\}$, let $G_{t+i}=(V,E_{t+i})$ and 
$H_i=\{v\in H:\;|V|-k-1<\delta_{G_{t+i}}(v)\le|V|-2\}$. 

After removing the edge $(u_i,v_i)$, we have that 
$\delta_{G_{t+i}}(u_i)=\delta_{G_{t+i-1}}(u_i)-1$ and 
$\delta_{G_{t+i}}(v_i)=\delta_{G_{t+i-1}}(v_i)-1$, whereas 
$\delta_{G_{t+i}}(x)=\delta_{G_{t-i-1}}(x)$ for every 
$x\in V-\{u_i,v_i\}$. 

Now we introduce the function 
\begin{displaymath}
excess(G_{t+i})=\sum_{x\in H_i}\left[k-(|V|-\delta_{G_{t+i}}(x)-1)\right]. 
\end{displaymath}

In a manner analogous to the proof of Theorem~\ref{th-bounds-added}, 
we have that by definition $excess(G_{t+t'})=0$ and 
$excess(G_t)=\displaystyle\sum_{u\in V,\ |V|-k-1<\delta_G(u)<|V|-2} 
\left[k-\left(|V|-\delta_G(u)-1\right)\right]$. Additionally, after removing 
the edge $(u_i,v_i)$, the following situations are possible: 

\begin{enumerate}
\item $u_i,v_i\in H_{i-1}$. In this case, since two vertices from $H_{i-1}$ 
have their degree decreased by $1$, we have that  
$excess(G_{t+i})=excess(G_{t+i-1})-2$. 
\item $u_i\in H_{i-1}$ and $v_i\notin H_{i-1}$, or 
\emph{vice versa}. Here, $excess(G_{t+i})=excess(G_{t+i-1})-1$. 
\end{enumerate} 

Now, to address the proof of the inequality 
in Equation~\ref{eq-lower-bound-removed}, we assume, for the purpose 
of contradiction, that 
$$t'<\left\lceil\frac{\displaystyle\sum_{u\in V,\ |V|-k-1<\delta_G(u)\le|V|-2} 
\left[k-\left(|V|-\delta_{G_t}(u)-1\right)\right]}{2}\right\rceil
=\left\lceil\frac{excess(G_t)}{2}\right\rceil.$$ 

In consequence, if $excess(G_t)$ is even, we have 
\begin{displaymath}
\begin{array}{rcl}
excess(G_{t+t'})&\geq&excess(G_t)-2t'\\
&>&excess(G_t)-2\cdot\frac{excess(G_t)}{2}\\
&=&0
\end{array}
\end{displaymath}
which is a contradiction, whereas in the case that $excess(G_t)$ is odd we have 
\begin{displaymath}
\begin{array}{rcl}
excess(G_{t+t'})&\geq&excess(G_t)-2(t'-1)-1\\
&>&excess(G_t)-2\left(\frac{excess(G_t)+1}{2}-1\right)-1\\
&=&0
\end{array}
\end{displaymath}
which is also a contradiction, so we can conclude 
that Equation~\ref{eq-lower-bound-removed} holds. 
As in Theorem~\ref{th-bounds-added}, the upper bound 
(Equation~\ref{eq-upper-bound-removed}) is trivial. 
\end{proof}


\section{$(2, \ell)$-adjacency anonymous transformations} 
\label{sect-2-ell-trans} 

In Algorithm~\ref{alg-k-1-anonymous-transf}, the fact that a vertex $v$ 
satisfies $v\in L$ means that the equivalence class composed by the vertices 
having adjacency representation $(1)$ with respect to 
the set $\{v\}$ in the original graph $G$ is not empty and its cardinality is 
smaller than~ $k$. 
Likewise, the fact that $v\in H$ means that the equivalence class composed 
by the vertices having adjacency representation $(2)$ with respect to the set 
$\{v\}$ in $G$ is not empty and its cardinality  
is smaller than $k$. To a limited extent, a strategy similar to the one applied 
in Algorithm~\ref{alg-k-1-anonymous-transf} can be used to obtain 
$(k,\ell)$-adjacency anonymous transformations with $\ell>1$. 
For example, for $\ell=2$, 
in addition to the sets $L,H\subset V(G)$, we would consider the sets 
$P_{11}, P_{12}, P_{21}, P_{22}\subseteq V(G)\times V(G)$, where 
$(u,v)\in P_{ij}$, $i,j\in\{1,2\}$, means that the the equivalence class 
composed by the vertices having adjacency representation $(i,j)$ 
with respect to $(u,v)$ in $G$ is not empty and its cardinality 
is smaller than $k$. 
Thus, the algorithm would work by executing the necessary edge set editions 
to increase the cardinalities of these equivalence classes or, alternatively, 
to empty them. However, it is impractical 
to use this philosophy in the general case, as it entails designing 
a different, highly casuistic algorithm for every different value of $\ell$. 

For the general case, we have devised a greedy edge-addition-based 
method that, for small values of $\ell$, allows to obtain 
$(2,\ell)$-adjacency anonymous transformations. Given a graph $G=(V,E)$, 
the method starts by computing all $1$-adjacency antiresolving sets of $G$ 
of cardinality at most $\ell$. Then, edges are iteratively 
added until obtaining a graph $G'=(V,E\cup E')$ such that $(G,G')$ 
is a $(2,\ell)$-adjacency anonymous transformation. The critical aspect 
of the proposed framework is how to determine an appropriate order for 
adding edges. 

In order to describe the proposed method, we will first introduce the following 
results, which characterise the sets of edges whose addition 
to a graph $G$ may modify the set of $1$-adjacency antiresolving sets. 
In what follows, we will use the notation $\mathcal{S}_{G,\ell}$ for 
the set of $1$-adjacency antiresolving sets of a graph $G$ having cardinality 
smaller than or equal to $\ell$. 

\begin{remark}\label{rm-useless-non-edges}
Let $G=(V,E)$ be a social graph, $u,v\in V$ a pair of vertices 
of $G$ such that $(u,v)\notin E$, and $G'=(V,E\cup\{(u,v)\})$. 
If, for every $S\in\mathcal{S}_{G,\ell}$, 
either $S\cap\{u,v\}=\{u,v\}$ or $S\cap\{u,v\}=\emptyset$, then 
$\mathcal{S}_{G,\ell}\setminus\mathcal{S}_{G',\ell}=\mathcal{S}_{G,\ell}$. 
\end{remark}

\begin{proof}
The result follows directly from the fact that, for every 
$S\in\mathcal{S}_{G,\ell}$ and every $x\in V\setminus S$, 
we have that $a_G(x\ | \ S)=a_{G'}(x\ | \ S)$, so $S\in\mathcal{S}_{G',\ell}$.
\end{proof} 

Algorithm~\ref{alg-2-ell-anonymous-transf} describes the edge-addition method. 
First, we use Remark~\ref{rm-useless-non-edges} to discard candidate vertex 
pairs $(u,v)$ that are known not to cause any $1$-adjacency antiresolving set 
of the current graph $G=(V,E)$ to become a $k$-adjacency antiresolving set 
of $G'=(V,E\cup\{(u,v)\})$ with $k>1$. Then, every remaining candidate pair 
$(u,v)$ is scored as follows: 
$$score((u,v))=|\{S:\;S\in\mathcal{S}_{G,\ell},u\in S,[v]_{S}^{G}=\{v\}\}|.$$  
where $[v]_{S}^{G}$ represents the equivalence class of $v$ 
in $\mathcal{A}_{G,S}$. 
In other words, we consider the number of times the candidate pair  
would modify the fingerprint of a uniquely identifiable vertex with respect 
to a $1$-adjacency antiresolving set of $G$. 
The intuition behind this heuristics is that the larger the number of times 
that the pair $(u,v)$ is found in this situation, the larger the likelihood 
that adding the edge $(u,v)$ will result in making some vertex set stop being 
$1$-adjacency antiresolving. At every iteration, the current perturbed graph 
$G'=(V,E')$ is transformed into the graph $G''=(V,E'\cup\{(u,v)\})$, 
where $(u,v)$ is the best-scored candidate addition satisfying 
$(\mathcal{S}_{G'',\ell}\setminus\mathcal{S}_{G',\ell})\cap\mathcal{S}_{G,\ell} 
=\emptyset$. 

\begin{algorithm}
\caption{Given a graph $G=(V,E)$ and a positive integer $\ell\ge 2$, 
obtain a graph $G'=(V,E')$ such that $(G,G')$ is a $(2,\ell)$-anonymous 
transformation.} 
\label{alg-2-ell-anonymous-transf} 
\begin{algorithmic}[1] 
\STATE $C\gets \emptyset$ 
\STATE Compute $\mathcal{S}_{G,\ell}$
\STATE $C\gets \emptyset$ 
\FOR {$(u,v)\in (V\times V)\setminus E$} 
\FOR {$S \in \mathcal{S}_{G,\ell}$} 
\IF {$|S\cap(u,v)|=1$} 
\STATE $C\gets C\cup\{(u,v)\}$ 
\BREAK 
\ENDIF 
\ENDFOR 
\ENDFOR 
\STATE Decrementally sort $C$ by 
$score((u,v))=|\{S:\;S\in\mathcal{S}_{G,\ell},u\in S,[v]_{S}^{G}=\{v\}\}|$
\STATE $t\gets 0$
\STATE $E_0\gets E$ 
\WHILE {$\mathcal{S}_{G,\ell}\setminus\mathcal{S}_{G_t,\ell}\neq\emptyset$} 
\STATE\label{st-check-back} $(u,v)={\arg\max}_{score(u,v)}\{(u,v)\in C:\; 
(\mathcal{S}_{(V,E_t\cup\{(u,v)\}),\ell}\setminus\mathcal{S}_{G_t,\ell}) 
\cap\mathcal{S}_{G,\ell}=\emptyset\}$ 
\STATE $E_{t+1}\gets E_t\cup\{(u,v)\}$ 
\STATE $C\gets C\setminus\{(u,v)\}$ 
\STATE $t\gets t+1$ 
\ENDWHILE
\RETURN $G_t$
\end{algorithmic} 
\end{algorithm} 

The asymptotic time complexity of Algorithm~\ref{alg-2-ell-anonymous-transf} 
is dominated by the computation and traversals of $\mathcal{S}_{G,\ell}$, 
which is $O(2^n)$ in the general case. However, for small values of $\ell$, 
these computations can be done in $O(n^{\ell})$ time. 
As we discussed before, active adversaries can only insert a limited amount 
of sybil nodes in the network without being detected, so the capacity 
of protecting the graph against such adversaries results in an important 
privacy increase. As an aid to speed-up the algorithm, the following result 
shows how the number of verifications to perform in evaluating the condition 
at step~\ref{st-check-back} can be largely reduced. 

\begin{theorem}\label{theo-limit-cross-check} 
Let $G=(V,E)$ be a social graph, $u,v\in V$ a pair of vertices 
of $G$ such that $(u,v)\notin E$, and $G'=(V,E\cup\{(u,v)\})$. 
Let $S\subseteq V$ such that it is a $k$-adjacency antiresolving set 
of $G$, with $k\ge 2$, and a $1$-adjacency antiresolving set 
of $G'$. If there exists $w\in S$ such that $d_G(u,w)>2$ and $d_G(v,w)>2$, 
then for every $x\in V\setminus S$ such that $[x]_S^{G'}=\{x\}$ it holds 
that $[x]_{S\setminus\{w\}}^{G'}=\{x\}$ and $|[x]_{S\setminus\{w\}}^{G}|>1$. 
\end{theorem}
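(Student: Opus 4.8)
The plan is to exploit the fact that adding a single edge barely perturbs the adjacency representations. First I would record the key structural observation: since $(u,v)\notin E$, passing from $G$ to $G'$ changes exactly one value of the map $a$, namely $a_G(u,v)=2$ becomes $a_{G'}(u,v)=1$, while $a_{G'}(p,q)=a_G(p,q)$ for every other pair. Consequently $a_G(\cdot\mid S)$ and $a_{G'}(\cdot\mid S)$ can differ only at the endpoint of $(u,v)$ lying outside $S$, and only when the other endpoint lies in $S$. Because $S$ is $k$-adjacency antiresolving in $G$ with $k\ge 2$ but merely $1$-adjacency antiresolving in $G'$, the partitions $\mathcal{A}_{G,S}$ and $\mathcal{A}_{G',S}$ must differ, so exactly one endpoint belongs to $S$; I would fix notation so that $u\in S$ and $v\notin S$, whence only the representation of $v$ changes (in the coordinate of $u$, from $2$ to $1$), and the relevant singleton to analyse is $x=v$. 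I would also note that the hypotheses force $w\neq u$ and $a_G(u,w)=a_G(v,w)=2$, values preserved in $G'$ since the new edge does not touch $w$.

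The conclusion $|[x]_{S\setminus\{w\}}^{G}|>1$ I would dispatch immediately by monotonicity: deleting the coordinate $w$ can only merge equivalence classes, so $[x]_S^{G}\subseteq[x]_{S\setminus\{w\}}^{G}$; since $S$ is $k$-adjacency antiresolving with $k\ge 2$ we have $|[x]_S^{G}|\ge k\ge 2$, and the inclusion gives $|[x]_{S\setminus\{w\}}^{G}|\ge 2$.

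The substantive conclusion is $[x]_{S\setminus\{w\}}^{G'}=\{x\}$, and here is where both distance hypotheses enter. I would argue by contradiction: suppose some $y\neq x$ satisfies $a_{G'}(y\mid S\setminus\{w\})=a_{G'}(x\mid S\setminus\{w\})$. Since $x$ is a singleton with respect to the full set $S$ in $G'$, $y$ must disagree with $x$ precisely on the $w$-coordinate. I would first rule out $y=w$: agreement on the $u$-coordinate would force $a_{G'}(u,w)=a_{G'}(u,v)$, but $a_{G'}(u,w)=a_G(u,w)=2$ whereas $a_{G'}(u,v)=1$. Hence $y\in V\setminus S$. Now $a_{G'}(w,x)=a_G(w,v)=2$ (using $d_G(v,w)>2$), so disagreement on the $w$-coordinate together with $y\neq w$ forces $a_{G'}(w,y)=1$, i.e. $y\sim_G w$. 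Likewise, agreement on the $u$-coordinate gives $a_{G'}(u,y)=a_{G'}(u,v)=1$, i.e. $y\sim_G u$. The triangle inequality then yields $d_G(u,w)\le d_G(u,y)+d_G(y,w)=2$, contradicting $d_G(u,w)>2$.

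I expect the main obstacle to be the reduction in the first paragraph: one must verify carefully that only the representation of the out-of-$S$ endpoint of the added edge can change and thereby pin down the newly created singleton as this endpoint $v$, before the triangle-inequality argument applies. The core of the proof, the contradiction $d_G(u,w)\le 2$, is short once this identification is in place, and it is exactly there that both hypotheses are used: $d_G(v,w)>2$ guarantees that the conflicting vertex $y$ must be adjacent to $w$, and $d_G(u,w)>2$ closes the argument through the path $u\sim_G y\sim_G w$.
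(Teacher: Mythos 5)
Your core argument is sound where it applies, and it is essentially the paper's own argument with the roles of $u$ and $v$ swapped: the paper also fixes the newly created singleton to be an endpoint of the added edge, produces a witness $y$ agreeing with it on $S\setminus\{w\}$, shows $y$ must be adjacent to $w$ and to the in-$S$ endpoint, and derives a contradiction with one of the distance hypotheses. You are in fact more careful than the paper on one point: you explicitly rule out $y=w$, which is necessary because the classes of $\mathcal{A}_{G',S\setminus\{w\}}$ live in $V\setminus(S\setminus\{w\})$ and hence may contain $w$; the paper silently assumes $y\in V\setminus(S\cup\{x\})$. Your monotonicity argument for $|[x]_{S\setminus\{w\}}^{G}|>1$ is also correct and matches the paper's.

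The genuine gap is the reduction in your first paragraph: from the (correct) observation that only the representation of the out-of-$S$ endpoint $v$ changes, you conclude that the only singleton to analyse is $x=v$. That inference fails. When $v$ leaves its class $[v]_{S}^{G}$, the vertex it leaves behind can itself become a singleton: if $k=2$ and $[v]_{S}^{G}=\{v,x\}$ with $x\neq v$, then $[x]_{S}^{G'}=\{x\}$ although $x\notin\{u,v\}$, and the theorem quantifies over every such $x$. Your proof says nothing about this $x$, and the conclusion can actually fail for it. Concretely, take $V=\{u,w,v,x,p,q,r,r'\}$ with edges $up,uq,pv,qx,xr,xr',rw,r'w$ and $S=\{u,w\}$: then $S$ is a $2$-adjacency antiresolving set of $G$ (classes $\{v,x\},\{p,q\},\{r,r'\}$), a $1$-adjacency antiresolving set of $G'=(V,E\cup\{(u,v)\})$ with singleton class $\{x\}$, and $d_G(u,w)>2$, $d_G(v,w)>2$; yet $[x]_{S\setminus\{w\}}^{G'}=[x]_{\{u\}}^{G'}\supseteq\{x,r,r'\}$. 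To be fair, the paper's proof has the same hole: it jumps from ``the change must involve the edge $(u,v)$'' to ``$x=u$ and $v\in S$, or vice versa,'' overlooking that the changed representation may belong to the witness $y$ rather than to $x$. The reduction (and with it both proofs) becomes valid if one assumes $k\ge 3$, or if the conclusion is restricted to the singleton that is an endpoint of the added edge.
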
 

\begin{proof} 
Consider a graph $G=(V,E)$, a pair of vertices $u,v\in V$ such that 
$(u,v)\notin E$ and a set $S\subseteq V$ satisfying the premises of 
Theorem~\ref{theo-limit-cross-check}. 
Also consider a vertex $w\in S$ such that $d_G(u,w)>2$ and $d_G(v,w)>2$ 
and a vertex $x\in V\setminus S$ such that $[x]_S^{G'}=\{x\}$. By the 
definition of $S$ we have that $|[x]_S^G|>1$, so there exists 
$y\in V\setminus(S\cup\{x\})$ such that $a_G(x|\;S)=a_G(y|\;S)$ 
and $a_{G'}(x|\;S)\neq a_{G'}(y|\;S)$. Suppose, for the purpose 
of contradiction, that $a_{G'}(w,x)\neq a_{G'}(w,y)$. Then, 
since $w\neq u$ and $w\neq v$, we have that $a_{G}(w,x)\neq a_{G}(w,y)$, 
which contradicts the fact that $a_G(x|\;S)=a_G(y|\;S)$. 
Therefore, we have that $a_{G'}(w,x)=a_{G'}(w,y)$, which 
implies that $a_G(x|\;S\setminus\{w\})=a_G(y|\;S\setminus\{w\})$ 
and $a_{G'}(x|\;S\setminus\{w\})\neq a_{G'}(y|\;S\setminus\{w\})$. 
Since the only difference between $G$ and $G'$ is the addition of the 
edge $(u,v)$, we conclude that either $x=u$ and $v\in S$, or 
\emph{vice versa}. For the remainder of this proof, we will assume 
$x=u$ and $v\in S$ without loss of generality. 

We will now proceed by reduction to absurdity. 
To that end, we will assume that 
$|[x]_{S\setminus\{w\}}^{G'}|>1$ or $[x]_{S\setminus\{w\}}^{G}=\{x\}$. 
It is simple to see that $[x]_{S\setminus\{w\}}^{G}=\{x\}$ contradicts 
the fact that $S$ is a $k$-adjacency antiresolving set of $G$ with $k\ge 2$. 
Therefore, in what follows we will focus on the assumption that 
$|[x]_{S\setminus\{w\}}^{G'}|>1$. 
In this case, there exists 
$y\in V\setminus (S\cup\{x\})$ such that $a_{G'}(x|\;S\setminus\{w\})= 
a_{G'}(y|\;S\setminus\{w\})$. Since $[x]_S^{G'}=\{x\}$, we have that 
$a_{G'}(w,x)\neq a_{G'}(w,y)$. Moreover, since $(w,x)\neq(u,v)$ 
and $(w,y)\neq(u,v)$, we have that $a_{G}(w,x)=a_{G'}(w,x)$ 
and $a_{G}(w,y)=a_{G'}(w,y)$. Thus, it holds that 
$a_{G'}(w,x)\neq a_{G'}(w,y)\implies a_G(w,x)\neq a_G(w,y)$, which entails 
$$(w\sim_{_G}x\wedge w\nsim_{_G}y)\vee(w\nsim_{_G}x\wedge w\sim_{_G}y)$$
Since $x=u$, we have that $w\nsim_{_G}x$, because $d_G(u,w)>2$, so  
$w\sim_{_G}y$. Moreover, since $u\sim_{_{G'}}v$ and 
$a_{G'}(v,x)=a_{G'}(v,y)$, then $v\sim_{_{G}}y$, which implies $d_G(w,v)=2$, 
again a contradiction. This concludes the proof. 
\end{proof} 

According to Theorem~\ref{theo-limit-cross-check}, when verifying if 
the addition of an edge $(u,v)$ causes some $k$-adjacency antiresolving 
set (with $k\ge 2$) of $G_t$ to become a $1$-adjacency antiresolving set 
of $G'_t=(V(G_t),E(G_t)\cup\{(u,v)\})$, it suffices 
to analyse those sets $S\in\mathcal{S}_{G_t,\ell}$ such that some 
$w\in S$ satisfies $d_{G_t}(u,w)\le 2$ or $d_{G_t}(v,w)\le 2$.

\section{Concluding remarks} 
\label{sect-conclusions}

In this paper, we have re-visited the notion of $(k,\ell)$-anonymity, which 
quantifies the privacy level of a social graph in the presence of active 
adversaries. Firstly, we have introduced the notion of $(k,\ell)$-anonymous 
transformations, which allow to reduce the amount of perturbation needed 
to protect a social graph from an active attack. Secondly, we have 
critically assessed the assumptions posed by $(k,\ell)$-anonymity on the 
adversary capabilities. Judging that it is unrealistic to assume that 
an adversary will be able to control all distances between a set of sybil 
nodes and every other vertex of the social graph, we introduced a new 
privacy property: $(k,\ell)$-adjacency anonymity, which accounts 
for adversaries who control the connection patterns with the neighbours 
of the sybil nodes. Finally, combining the two previous ideas, we have 
introduced $(k,\ell)$-adjacency anonymous transformations, which are able 
to protect a social graph from active adversaries levaraging up to $\ell$ 
sybil nodes and constructing fingerprints based on the connection patterns 
between victims and sybil nodes. We proposed two algorithms: one for 
obtaining $(k,1)$-adjacency anonymous transformations for arbitrary values 
of $k$, and another for obtaining $(2,\ell)$-adjacency anonymous 
transformations for small values of $\ell$. The first algorithm is efficient 
and the number of changes introduced in the graph is bounded. On the other 
hand, there is still room for improvement in the second method, especially 
concerning the order in which graph perturbations are applied. We are 
currently using a greedy heuristic to guide the edge-addition process. 
We will evaluate the convenience of this heuristic, and explore the use
of meta-heuristics  such as genetic algorithms and ant-colony optimisation. 

\bibliographystyle{elsart-num-sort} 


\end{document}